\numberwithin{equation}{section} \setlength{\textwidth} {15cm}
\def\beq{\begin{equation}}
\def\eeq{\end{equation}}
\def\bR{ {{\mathbb{R}}}}
\def\Ric{ {{\rm{Ric}}} }
\def\Tr{ {{\rm{Tr}}} }
\newcommand{\pk}[1]{p_{\kappa}}
\newcommand{\ol}{\overline{l}}
\newcommand{\ul}{\underline{l}}
\newcommand{\oL}{\overline{L}}
\newcommand{\uL}{\underline{L}}
\newcommand{\on}{\overline{n}}
\newcommand{\un}{\underline{n}}
\newcommand{\oV}{\overline{V}}
\newcommand{\uV}{\underline{V}}
\newcommand{\oPsi}{\overline{\Psi}}
\newcommand{\uPsi}{\underline{\Psi}}
\newtheorem{defn}{{\bf Definition}}[section]
\newtheorem{thm}[defn]{{\bf Theorem}}
\newtheorem{cor}[defn]{{\bf Corollary}}
\newtheorem{lem}[defn]{{\bf Lemma}}
\newtheorem{rem}[defn]{{\bf Remark}}
\newtheorem{notation}[defn]{Notation}
\newenvironment{proof}[1][Proof]{\textbf{#1.} }{\hfill \rule{0.5em}{0.5em}}
\begin{document}

\title{Loop representation of Quantum Gravity}
\author{Adrian P. C. Lim \\
Email: ppcube@gmail.com
}

\date{}

\maketitle

\begin{abstract}
A hyperlink is a finite set of non-intersecting simple closed curves in $\mathbb{R}^4 \equiv \bR \times \bR^3$, each curve is either a matter or geometric loop. We consider an equivalence class of such hyperlinks, up to time-like isotopy, preserving time-ordering. Using an equivalence class and after coloring each matter component loop with an irreducible representation of $\mathfrak{su}(2) \times \mathfrak{su}(2)$, we can define its Wilson Loop observable using an Einstein-Hilbert action, which is now thought of as a functional acting on the set containing equivalence classes of hyperlink.

Construct a vector space using these functionals, which we now term as quantum states. To make it into a Hilbert space, we need to define a counting probability measure on the space containing equivalence classes of hyperlinks.

In our previous work, we defined area, volume and curvature operators, corresponding to given geometric objects like surface and a compact solid spatial region. These operators act on the quantum states and by deliberate construction of the Hilbert space, are self-adjoint and possibly unbounded operators.

Using these operators and Einstein's field equations, we can proceed to construct a quantized stress operator and also a Hamiltonian constraint operator for the quantum system. We will also use the area operator to derive the Bekenstein entropy of a black hole.

In the concluding section, we will explain how Loop Quantum Gravity predicts the existence of gravitons, implies causality and locality in quantum gravity, and formulate the principle of equivalence mathematically in its framework.
\end{abstract}

\hspace{.35cm}{\small {\bf MSC} 2010: } 83C45, 81T45, 57R56 \\
\indent \hspace{.35cm}{\small {\bf Keywords}: Loop Quantum gravity, Einstein-Hilbert, Quantum geometry, Causality,    \\
\indent \hspace{2.1cm} Locality, Bekenstein entropy,  Hamiltonian, Stress operator, Gravitons}




\section{Introduction}

After Bernhard Riemann developed non-Euclidean geometry, or more commonly known as Riemannian geometry, he tried to work on a geometric theory of gravity before his death in 1866. He did not succeed, simply because he was only working on 3-dimensional space. It was half a century later that Einstein made use of Riemannian geometry and developed the theory of gravity on 4-dimensional space-time, which is now more commonly known as Generality Relativity (GR).

The turn of the 20th century saw the birth of Quantum Mechanics (QM), which was successful in explaining the orbits of electrons in an atom. The mathematical framework for the formalism of Quantum Mechanics (QM) is based on Functional Analysis, of which its rigorous mathematical foundation is attributed to important researchers like Dirac, Hilbert, Neumann and Weyl.

At around 1927, physicists began to apply QM to fields, resulting in Quantum Field Theories (QFT). This calls for making QFT mathematically consistent, which leads to the emergence of a new branch of mathematics called Constructive Quantum Field Theory (CQFT). The article \cite{2012arXiv1203.3991S} gives a very good account of CQFT. See also \cite{Jaffe}. A sub-branch of CQFT would be Gauge Quantum Field Theory (GQFT), whereby one would quantize classical local gauge theories using path integrals. 

The next natural step would be to quantize gravity. Thiemann in \cite{Thiemann:2002nj, Thiemann:2007zz}, stressed the importance of a background free metric in a quantum theory of gravity, especially in extreme cases such as near a black hole or near the big bang singularity. When the effects of gravity can be ignored, it is reasonable to use the Minkowski metric, which is the metric used in QFT. However, GR resolves to do away with this background metric. Perturbative methods will no longer work when trying to merge GR with QFT together.

The two theories, GR and QM, are incompatible in many ways, and by no means are the reasons we cite below exhaustive. Firstly, time is absolute in QM, but time is relative in GR. Causality is respected in relativity and on this note, GR does not allow instantaneous communication of information. But whether QM respects causality is debatable. A correct quantum theory of gravity should treat time as relative and incorporate causality in its theory. Thus, the notion of absolute time has to be abandoned. We will replace it with time-ordering instead, to be discussed extensively throughout this article.

Secondly, GR gives us a deterministic geometric description of space-time, whereas QM is a non-deterministic theory. We cannot keep both characteristics, and it is obvious that a quantum theory of gravity must be a non-deterministic description of space-time. Thus, we must have a topological theory to form the mathematical foundation of a quantum theory of gravity. But bear in mind that if we have a topological theory, then there is no distinction between space and time, hence causality will not be respected. Our proposed idea would be to consider an equivalence class of geometrical objects in the theory, in place of topological objects. See Section \ref{s.obs}.

Finally, GR is a local theory, while QM is a global theory. It would seem that it is not possible to reconcile both together and we have to choose one in favor of the other. But, we will see that it is possible to keep both aspects in the theory. See subsection \ref{ss.lpe}.

In \cite{Thiemann:2002nj}, Thiemann explained that there are major inconsistencies in QFT and GR. In QFT, when distances are short, GR predicted the conversion of virtual particles into black holes. The quantization of volume will forbid such phenomena from happening. See Section \ref{ss.vo}. Einstein's equations will lead to singularities in the metric, for example the metric in the Friedmann-Robertson-Walker (FRW) universe model has a singular metric when we approach beginning of time. Quantum gravity forbids this from happening. See Section \ref{ss.co}.

Finally there is another inconsistency, which is in Einstein's equations, given by Equation (\ref{e.einstein}). There are problems with this equation, as explained again in \cite{Thiemann:2002nj}. The metric enters into the computation of the stress-energy tensor, which in QFT, is computed using the Minkowski metric. However, Einstein's equations are used to solve for the metric, and when the space is strongly curved, the correct metric is no longer the Minkowski metric. Hence in such a regime, QFT is no longer valid. Furthermore, the LHS of the equation is geometry or classical theory, whereas the RHS is matter interaction or quantum theory. Mathematically, the field equations are also incompatible, in the sense that the LHS is interpreted as a tensor, whereas the RHS should be interpreted as an operator.  Therefore, he called for writing the equation as an operator equation and solve for a quantized metric. However, we cannot concur with him on this. The metric is a dynamical variable, not an observable to be quantized. To rectify this inconsistency, we will instead use Einstein's equations to quantize the stress-energy tensor. See Section \ref{s.hmo}.

QM and GR have indeed revolutionized physics as we know it and both have changed the way we view the world. See \cite{rovelli2004quantum}. To understand Planck scale physics (See \cite{Rovelli1998}.), and with the reasons cited above, it is indeed necessary to have a quantum theory of gravity.

But what would a theory of quantum gravity look like? A quantum theory of gravity should be independent of background metric (See \cite{Smolin2006-SMOTCF}.), and also it should be invariant under diffeomorphism of the ambient space-time manifold, which we will choose to be $\bR \times \bR^3$, in this article. These are the main ingredients quantum gravity should have, as explained by Thiemann.

A potential candidate for quantum gravity is M-theory, which is a topological theory. This theory lives in ten dimensions and is necessarily supersymmetric. Unfortunately, it is not background metric independent, as pointed out in \cite{Ashtekar:2000eq}. Another issue with M-theory is that it is difficult to maintain locality and causality together, as explained in \cite{THOOFT2001157}. Thiemann in \cite{Thiemann:2007zz} also gave reasons to support his own reservations.

Witten argued in \cite{MR990772} that Quantized Chern-Simons Theory (QCS) is related to knot theory. This may arguably be the first paper that gives rise to Topological Quantum Field Theory (TQFT). Now, QCS is actually a GQFT, using the classical Chern-Simons action, which is independent of any metric used, to construct Wilson Loop observables. Indeed it does give us knot invariants that one is familiar with. See \cite{CS-Lim02}. We can use a path integral, but using the Yang-Mills action, to quantize the Yang-Mills Theory into Quantized Yang-Mills Theory (QYM). Unfortunately, it is not a TQFT, as its construction is metric dependent.

The Einstein-Hilbert action in GR will yield Einstein's field equations through the principle of least action. See \cite{Witten:1988hc}. A quantized theory of Einstein-Hilbert Theory, should also be a TQFT, as a successful quantization of GR should be metric independent. This insight has already been discussed in \cite{PMIHES_1988__68__175_0} and \cite{atiyah_1990}. But is there such a quantized Einstein-Hilbert Theory? The answer is yes, and is given by Loop Quantum Gravity (LQG), which has been around for decades.

A path integral of the form Expression \ref{e.eh.1} given later, is independent of background metric. And we will use this expression to construct a loop representation for quantum gravity, first mooted in \cite{PhysRevLett.61.1155}. In this LQG model, one can actually see how matter and space-time interact, on equal pedestal, details to be given later. This is referred to as quantum geometry in \cite{Ashtekar:2000eq}. Matter is represented by matter hyperlinks, to be defined in Section \ref{s.cvs}, whereas space-time is represented by geometric hyperlinks, surfaces and compact solid regions. This is also highlighted in \cite{Thiemann:2002nj}. Furthermore, the theory implies causality and predicts the existence of gravitons, which has to play an important role in quantum gravity. See Section \ref{s.fr}.

\section{Spin Networks}\label{s.sn}

LQG has existed for decades and many authors have written good introductory works, such as \cite{rovelli2004quantum, Rovelli1998, Mercuri:2010xz,  Ashtekar:2004vs}, on this subject and giving reasons for the necessity for a quantum theory of gravity.
A mathematical formulation of quantum gravity in a 3-manifold can be found in \cite{Thiemann:2007zz, 1990NuPhB.331...80R}. We would also like to mention an expository article \cite{Carlip:1995zj} on $2+1$ dimensional quantum gravity, which is related to Chern-Simons theory, as explained in \cite{Witten:1988hc}.

By no means is this list exhaustive and we apologize if we missed out any other suitable references. From this list of references, people will generally associate LQG with spin networks in a 3-manifold and canonical quantization of GR. It is also synonymous with  Quantum Spin Dynamics (QSD) or Canonical Quantum Gravity (CQG) (See \cite{rovelli2004quantum}.), which we think are more appropriate designations.

The idea of a spin network was first conceived by Penrose. A spin network is essentially a graph, each vertex has valency 3 and a spin is assigned to each edge, satisfying a certain inequality for edges incident on a common vertex. We refer the reader to \cite{Baez:1999sr, Baez:1999:Online, Baez1996253, Rovelli:1995ac} for a more detailed description.

The role of a spin network in quantum gravity is that it discretizes space-time and approximates the space of connections by treating it as a finite product copies of a (compact) gauge group. This allows us to define a Haar measure on this finite dimensional space. By taking the union of all possible spin networks, one defines a Hilbert space of functionals on connection. This construction was made rigorous in \cite{Baez1996253}. On a separate note, Ashtekar and Lewandowski defined a generalized measure on the space of connections. Such a measure is also defined using cylindrical measures or cylinder functions. See \cite{Thiemann:2002nj, Thiemann:2007zz}.

Now, the space of connections modulo gauge transformations is an infinite dimensional space, so there is no notion of Lebesgue measure or even Haar measure that can possibly be defined on it. Thus, the construction as described earlier do not define a measure. As such, it does not describe the loop representation as described in \cite{PhysRevLett.61.1155}. Furthermore, the spin networks do not define a suitable set of functionals on the space of connections. Hence, it does not describe the conjugate self-dual representation, also described in \cite{PhysRevLett.61.1155}.

Another argument against using the spin networks to quantize gravity is that it does not take into account of the Lorentz group, or the Poincare group. The spin numbers assigned to the edges only represent the angular momentum. In addition, spin networks are good for describing gravity in 3-manifold; to describe gravity in 4-manifold, one has to consider spin foam. See \cite{Baez:1999sr}.

The Hamiltonian formulation of GR will lead to define physically admissible states, defined by three constraints, namely the Gauss constraint, spatial diffeomorphism constraint and the Hamiltonian constraint. See \cite{Ashtekar:2000eq, doi:10.1063/1.1587095, Thiemann:2006cf}. The Gauss constraint is associated with ${\rm SU}(2)$ gauge transformations. The diffeomorphism constraint is associated with the invariance of GR under spatial diffeomorphisms. The Hamiltonian constraint is associated with the invariance of GR under diffeomorphisms of the spatial surface (region) for 1+2 (1+3) dimensional quantum gravity.

Because our ambient space is $\bR^4 \equiv \bR \times \bR^3$, we see that we can foliate $\bR^4$ into a set containing hypersurfaces $\{\Sigma_t:\ t \in \bR\}$. See \cite{Thiemann:2007zz}. Explicitly, we have a diffeomorphism $\phi: \bR \times \bR^3 \rightarrow \bR \times \bR^3$, whereby $(x_0, x) \mapsto \phi(x_0, x)$, with $\Sigma_t := \{\phi(t, x):\ x \in \bR^3\}$. This diffeomorphism $\phi$ is completely arbitrary and thus, the set of foliations is in one to one correspondence with the set of diffeomorphisms of $\bR \times \bR^3$. Let $\{S_a(t,p):\ a = 1,2,3\}$ be a set of tangent vector fields at $p \in \Sigma_t$ and by introducing a metric $g$, let $n$ be a unit normal vector to $\Sigma_t$ at $p$. The vector $\partial_{x_0} \equiv \partial/\partial x_0$ is being push forward by $\phi$ to be $\phi_\ast \partial_{x_0}$ and we can write \beq \phi_\ast \partial_{x_0} = Nn + U^a S_a,\ N \neq 0. \nonumber \eeq Now, $N$ and $U^a S_a$ are called the lapse function and shifted vector fields respectively. The lapse function $N$ is associated with the Hamiltonian constraint; the shifted vector fields will be associated with the spatial diffeomorphism constraint, as explained in \cite{Thiemann:2002nj}.

Together, the spatial diffeomorphism constraint and the Hamiltonian constraint are consequences of the invariance of GR under four-dimensional diffeomorphism. But there is a reason why we consider the Hamiltonian constraint separately. The diffeomorphism in the normal direction $n$ or orthogonal to the hypersurface means evolution in the time parameter.

Let us return back to the lapse function $N$ and since it cannot vanish, we assume that it is always positive. By foliating the space, we see that there is a well-defined notion of a flow of time, thus breaking the symmetry between space and time. This is crucial in the quantization of gravity, because later it will allow us to apply axial gauge fixing. As a consequence, it also allows us to order the hypersurfaces, i.e. $\Sigma_t < \Sigma_\tau$ if $t < \tau$.

The usual metric canonical coordinates are used in the ADM formulation of GR. Later, Ashtekar introduced (density-valued) soldering and $\mathfrak{su}(2)$-valued one forms in 3-space, to replace the metric variables. See \cite{PhysRevLett.57.2244, PhysRevD.36.1587}. These replacement variables are henceforth referred to as Ashtekar variables, and the Gauss and Hamiltonian constraints can be rewritten in these new variables. One advantage in this approach, is that Thiemann used a generalization of Ashtekar's formalism and promote the Hamiltonian constraint to a quantum operator in the LQG framework. See \cite{THIEMANN1996257}. We will also define the Hamiltonian constraint operator later in Section \ref{s.hmo}.

In any quantum theory, operators corresponding to physical observables has to be constructed. In \cite{rovelli1995discreteness}, the authors used canonical quantization on Ashtekar variables to quantize area and volume, for 3-dimensional quantum gravity. For a good expository account of canonical quantization, the reader can refer to \cite{doi:10.1063/1.1587095}. This procedure will work fine, but it fails when applied to curvature. The reason is because curvature involves derivatives of the Ashtekar variables and one will run into problems trying to apply canonical quantization.

In the quantization of area and volume, the authors in the said article used spin networks as the states for which these operators act on. The confusion comes in as spin networks will then appear to play multiple roles in connection and loop representations of quantum gravity. See the section on mathematical foundations of quantum gravity in \cite{Thiemann:2007zz}. To resolve this issue, the authors in \cite{Rovelli:1995ac} described a mapping from the connection representation to loop representation.

When implementing canonical quantization, one has to ensure that the three constraints as discussed above are all satisfied, which will lead to some techinical difficulties. See \cite{Thiemann:2006cf}.  Using the spin network description in QSD or CQG, will not incorporate diffeomorphism invariance inside the theory. To obtain diffeomorphism invariance, a diffeomorphism constraint has to be imposed on this space of functionals. But there are some mathematical difficulties trying to implement this procedure. See \cite{Thiemann:2002nj, Ashtekar:2000eq, 0264-9381-21-15-R01}.

\begin{rem}
Such a problem does not exist if one uses path integral quantization.
\end{rem}

One possible remedy would be to construct a set of loops from a spin network, as was done in \cite{Rovelli:1995ac}. The reason for considering loops is due to \cite{PhysRevLett.61.1155}, whereby the authors described the connection between quantum gravity and link invariants. Any quantized theory which yield topological invariants will be invariant under diffeomorphism of the ambient space.

With all the above reasons as discussed above, it would be better to be consistent, and work on the loop representation of quantum gravity in $\bR \times \bR^3$. The objects we work with are hyperlinks, which is a set of non-intersecting, simple closed curves in $\bR \times \bR^3$. In \cite{CS-Lim02}, we showed how a link in $\bR^3$ can be projected to form a planar graph, with each vertex having valency 4, and hence compute the linking number from it. In \cite{EH-Lim03}, we defined the hyperlinking number of a hyperlink, and showed how one can compute Wilson Loop observables from it. In \cite{EH-Lim06}, we showed that the hyperlinking number is equal to the linking number of the projection of a hyperlink to form a planar graph, up to $\pm 1$, provided we define a time-ordering (Definition \ref{d.tau.1}) for the component loops. Under time-like isotopy (Definition \ref{d.ts.1}) and time ordering (Definition \ref{d.tl.2}), the hyperlinking number of a hyperlink will be an invariant.

When one considers loop representation of quantum gravity, all three constraints are satisfied, as explained in \cite{doi:10.1063/1.1587095}. In \cite{0264-9381-21-15-R01}, the authors mentioned using an averaging process to impose the diffeomorphism constraint on the quantum Hilbert space. For the case involving the Hamiltonian constraint, the reader can also refer to \cite{JACOBSON1988295}.

\begin{rem}
We will explain later how we implement the Gauss constraint in Item \ref{r.tr.4a} in Remark \ref{r.tr.4}, the diffeomorphism constraint in Item \ref{r.tr.3a} and the Hamiltonian constraint in Item \ref{r.tr.3b}, both in Remark \ref{r.tr.3}.
\end{rem}

A path integral  is essentially, an averaging process over all possible dynamical variables considered in the classical theory. Path integrals are used extensively in CQFT. See \cite{2012arXiv1203.3991S}. We would like to mention the work by the authors in \cite{PhysRevLett.61.1155}, who were the first to write down a path integral for quantum gravity. As pointed out in \cite{Thiemann:2007zz}, path integrals has the advantage that they are diffeomorphism invariant, but unfortunately they are ill-defined.

The path integrals involving area, volume and curvature, were all defined and computed in \cite{EH-Lim03}, \cite{EH-Lim04} and \cite{EH-Lim05} respectively. Each of these path integrals can be explicitly computed using topological invariants defined in \cite{EH-Lim06}, hence consistent with the view point that geometric notions should play a central role in LQG, as stated in \cite{Ashtekar:2000eq}. Thus, we now have a consistent way of quantizing physical observables into operators, which we will summarize the results in Section \ref{s.qo}.

Diffeomorphism invariance in a theory is immediate once topological invariants are obtained. Compare this with canonical quantization in QSD or CQG, whereby one has to enforce a diffeomorphism constraint. In this article, we will see that area, volume and curvature will be quantized into operators and their respective eigenvalues have a physical interpretation, to be explained in the later sections, something which canonical quantization is not able to achieve.

\section{Spin representation}\label{s.su.1}

\begin{rem}
Throughout this article, we adopt Einstein's summation convention, i.e. we sum over repeated superscripts and subscripts. Indices such as $a,b,c, d$ and greek indices such as $\mu,\gamma, \alpha, \beta$ will take values from 0 to 3; indices labeled $i, j, k$, $\bar{i}, \bar{j}, \bar{k}$ will only take values from 1 to 3.
\end{rem}

Let $\mathfrak{su}(2)$ be the Lie Algebra of ${\rm SU}(2)$. Here, we summarize the representation of the following Lie Algebra $\mathfrak{su}(2) \times \mathfrak{su}(2)$ as described in \cite{EH-Lim02}. This direct product of Lie Algebras inherit the Lie bracket from $\mathfrak{su}(2)$ in the obvious way.

Let $\{\breve{e}_1, \breve{e}_2, \breve{e}_3\}$ be any basis for the first copy of $\mathfrak{su}(2)$ and
$\{\hat{e}_1, \hat{e}_2, \hat{e}_3\}$ be any basis for the second copy of $\mathfrak{su}(2)$, satisfying the conditions
\begin{align*}
[\breve{e}_1, \breve{e}_2] =& \breve{e}_3,\ \ [\breve{e}_2, \breve{e}_3] = \breve{e}_1,\ \ [\breve{e}_3, \breve{e}_1] = \breve{e}_2, \\
[\hat{e}_1, \hat{e}_2] =& \hat{e}_3,\ \ [\hat{e}_2, \hat{e}_3] = \hat{e}_1,\ \ [\hat{e}_3, \hat{e}_1] = \hat{e}_2.
\end{align*}

Using this basis, define \beq \mathcal{E}^+ = \sum_{i=1}^3\breve{e}_{i}\ ,\ \mathcal{E}^- = \sum_{i=1}^3\hat{e}_{i}. \nonumber \eeq

Let
\beq \hat{E}^{01} = (\breve{e}_1,0),\ \hat{E}^{02} = (\breve{e}_2,0) ,\ \hat{E}^{03} = (\breve{e}_3,0)  \nonumber \eeq and \beq \hat{E}^{23} = (0,\hat{e}_1) ,\ \hat{E}^{31} = (0,\hat{e}_2) ,\ \hat{E}^{12} = (0,\hat{e}_3). \nonumber \eeq Do note that $\hat{E}^{\alpha\beta} = -\hat{E}^{\beta\alpha}$ and write
\beq \hat{E}^{\tau(1)} = \hat{E}^{23}, \ \ \hat{E}^{\tau(2)} = \hat{E}^{31}, \ \ \hat{E}^{\tau(3)} = \hat{E}^{12}, \nonumber \eeq all in $\mathfrak{su}(2) \times \mathfrak{su}(2)$. Finally, denote \beq \mathcal{E} :=
\left(
  \begin{array}{cc}
    \mathcal{E}^+ &\ 0 \\
    0 &\ -\mathcal{E}^- \\
  \end{array}
\right) \cong (\mathcal{E}^+, \mathcal{E}^-) \in \mathfrak{su}(2) \times \mathfrak{su}(2). \nonumber \eeq

Let $\rho^\pm: \mathfrak{su}(2) \rightarrow {\rm End}(V^\pm)$ be an irreducible finite dimensional representation, indexed by half-integer and integer values $j_{\rho^\pm} \geq 0$. The representation $\rho: \mathfrak{su}(2) \times \mathfrak{su}(2) \rightarrow {\rm End}(V^+) \times {\rm End}(V^-)$ will be given by $\rho = (\rho^+, \rho^-)$, with \beq \rho: \alpha_i\hat{E}^{0i} + \beta_j \hat{E}^{\tau(j)} \mapsto \left(\sum_{i=1}^3\alpha_i \rho^+(\breve{e}_i), \sum_{j=1}^3\beta_j \rho^-(\hat{e}_j) \right). \nonumber \eeq By abuse of notation, we will now write $\rho^+ \equiv (\rho^+, 0)$ and $\rho^- \equiv (0, \rho^-)$ in future and thus $\rho^+(\hat{E}^{0i}) \equiv \rho^+(\breve{e}_i)$, $\rho^-(\hat{E}^{\tau(j)}) \equiv \rho^-(\hat{e}_j)$.

Without loss of generality, we assume that $\rho^\pm(\hat{E})$ is skew-Hermitian for any $\hat{E} \in \mathfrak{su}(2)$ and that $V^\pm$ is a finite dimensional inner product space. Note that the dimension of $V^\pm$ is given by $2j_{\rho^\pm} + 1$. Then it is known that the Casimir operator is
\begin{align*}
\sum_{i=1}^3 \rho^+(\hat{E}^{0i})\rho^+(\hat{E}^{0i}) \equiv \sum_{i=1}^3 \rho^+(\breve{e}_i) \rho^+(\breve{e}_i) = -\xi_{\rho^+} I_{\rho^+},\\
\sum_{i=1}^3 \rho^-(\hat{E}^{\tau(i)})\rho^-(\hat{E}^{\tau(i)}) \equiv \sum_{i=1}^3 \rho^-(\hat{e}_i) \rho^-(\hat{e}_i)= -\xi_{\rho^-} I_{\rho^-},
\end{align*}
$I_{\rho^\pm}$ is the identity operator for $V^\pm$ and $\xi_{\rho^\pm} := j_{\rho^\pm}(j_{\rho^\pm}+1)$. In QM, $\sqrt{\xi_{\rho^\pm}}$ is interpreted as the total momentum of a particle, given the representation $\rho^\pm$.

Without any loss of generality, by choosing a suitable basis in $V^+$, we will always assume that $\rho^+(i\mathcal{E}^+)$ is diagonal, with the real eigenvalues given by the set
\beq \left\{
\begin{array}{ll}
\{\pm \breve{\lambda}_1, \pm \breve{\lambda}_2, \cdots, \pm \breve{\lambda}_{(2j_{\rho^+}+1)/2}\}, & \hbox{$2j_{\rho^+} + 1$ is even;} \\
\{\pm \breve{\lambda}_1, \pm \breve{\lambda}_2, \cdots, \pm \breve{\lambda}_{j_{\rho^+}}, 0\}, & \hbox{$2j_{\rho^+} + 1$ is odd.}
\end{array}
\right. \nonumber \eeq If we assume that $\breve{\lambda}_k > 0$ and $\breve{\lambda}_1< \breve{\lambda}_2 < \cdots$ is increasing, then we have $\breve{\lambda}_{k+1} - \breve{\lambda}_k = \sqrt3$ for each $k$. That is, the eigenvalues are equally spaced apart by a constant $\sqrt3$. In QM, these eigenvalues are related to the $z$-component of the momentum of a particle, given the representation $\rho^+$.

Similarly, by choosing another suitable basis in $V^-$, we will always assume that $\rho^-(i\mathcal{E}^-)$ is diagonal, with the real eigenvalues given by the set
\beq \left\{
\begin{array}{ll}
\{\pm \hat{\lambda}_1, \pm \hat{\lambda}_2, \cdots, \pm \hat{\lambda}_{(2j_{\rho^-}+1)/2}\}, & \hbox{$2j_{\rho^-} + 1$ is even;} \\
\{\pm \hat{\lambda}_1, \pm \hat{\lambda}_2, \cdots, \pm \hat{\lambda}_{j_{\rho^-}}, 0\}, & \hbox{$2j_{\rho^-} + 1$ is odd.}
\end{array}
\right. \nonumber \eeq  We can interpret these eigenvalues in a similar manner.

We have
\begin{align}
\label{e.tr.1}
\begin{split}
\Tr\ \rho(e^{i\mathcal{E}^{+}}) =&
\left\{
  \begin{array}{ll}
    \sum_{v=1}^{(2j_{\rho^+}+1)/2}2\cosh(\breve{\lambda}_v), & \hbox{$2j_{\rho^+} + 1$ is even;} \\
    1 + \sum_{v=1}^{j_{\rho^+}}2\cosh(\breve{\lambda}_v), & \hbox{$2j_{\rho^+} + 1$ is odd;}
  \end{array}
\right. \\
\Tr\ \rho(e^{i\mathcal{E}^{-}}) =&
\left\{
\begin{array}{ll}
\sum_{v=1}^{(2j_{\rho^-}+1)/2}2\cosh(\hat{\lambda}_v), & \hbox{$2j_{\rho^-} + 1$ is even;} \\
1 + \sum_{v=1}^{j_{\rho^-}}2\cosh(\hat{\lambda}_v), & \hbox{$2j_{\rho^-} + 1$ is odd.}
\end{array}
\right.
\end{split}
\end{align}

In either case, we have $\Tr\ \rho^{\pm}(e^{i\mathcal{E}^{\pm}}) \geq 1$ and hence $\log\ \Tr\ \rho^{\pm}(e^{i\mathcal{E}^{\pm}}) \geq 0$, for any irreducible representation. Finally, note that $\Tr\ \rho^{\pm}(e^{i\mathcal{E}^{\pm}})$ is well-defined, even though $\mathcal{E}^\pm$ is not in general.


\section{Time-like hyperlink}\label{s.tls}


As stated in \cite{Thiemann:2002nj}, according to a theorem by Geroch, any globally hyperbolic 4-manifold has to be of the form $\bR \times M$, whereby $M$ is a 3-manifold. See \cite{Geroch:1970uw}. The simplest but no doubt important 3-manifold to be considered will be $\bR^3$. Thus in this article, the 4-manifold in consideration will be $\bR^4 \cong \bR \times \bR^3$, whereby $\bR$ will be referred to as the time axis and $\bR^3$ is the spatial 3-dimensional Euclidean space. Fix the standard coordinates on $\bR^4$, $(x_0, x_1, x_2, x_3)$, whereby $x_0$ will be referred to as time.

In future, when we write $\bR^3$, we refer to the spatial subspace in $\bR^4$. Let $\pi_0: \bR^4 \rightarrow \bR^3$ denote this projection. Let $\{e_i\}_{i=1}^3$ be the standard basis in $\bR^3$. And $\Sigma_i$ is the plane in $\bR^3$, containing the origin, whose normal is given by $e_i$. So, $\Sigma_1$ is the $x_2-x_3$ plane, $\Sigma_2$ is the $x_3-x_1$ plane and finally $\Sigma_3$ is the $x_1-x_2$ plane. Note that $\bR \times \Sigma_i \cong \bR^3$ is a 3-dimensional subspace in $\bR^3$ and let $\pi_i: \bR^4 \rightarrow \bR \times \Sigma_i$ denote this projection.

For a finite set of non-intersecting simple closed curves in $\bR^3$ or in $\bR \times \Sigma_i$, we will refer to it as a link. If it has only one component, then this link will be referred to as a knot. A simple closed curve in $\bR^4$ will be referred to as a loop. A finite set of non-intersecting loops in $\bR^4$ will be referred to as a hyperlink in this article. If we assign an orientation to each component loop, then the hyperlink is said to be oriented.

We need to consider the space of hyperlinks in $\bR \times \bR^3$, which is too big for our consideration. Given any hyperlink in $\bR \times \bR^3$, it is ambient isotopic to the trivial hyperlink, a disjoint union of `unlinked' loops. Hence, the equivalence class of ambient isotopic hyperlinks will give us only the trivial hyperlink. Therefore, we will instead consider a special equivalence class of hyperlinks.

\begin{defn}(Time-like hyperlink)\label{d.tl.1}\\
Let $L$ be a hyperlink. We say it is a time-like hyperlink if given any 2 distinct points $\vec{p}\equiv (x_0, x_1, x_2, x_3), \vec{q}\equiv (y_0, y_1, y_2, y_3) \in L$, $\vec{p} \neq \vec{q}$,
\begin{enumerate}
  \item (T1) $\sum_{i=1}^3(x_i - y_i)^2 > 0$;
  \item (T2) if there exist $i, j$, $i \neq j$ such that $x_i = y_i$ and $x_j = y_j$, then $x_0 - y_0 \neq 0$.
\end{enumerate}
\end{defn}

We make the following remarks, which is immediate from the definition.

\begin{rem}\label{r.p.1}
\begin{enumerate}
  \item We refer the reader to \cite{EH-Lim06} for the reason why we use the term `time-like'.
  \item In Condition T1, we insist that any 2 distinct points in a hyperlink are also separated, when projected using $\pi_0$, in 3-dimensional spatial space $\bR^3$. This is to ensure that when we project the hyperlink in $\bR^3$, we obtain a link.
  \item Conditions T1 and T2 imply that given a hyperlink $L$, for each $i=1,2,3$, $\pi_i(L) \in \bR \times \Sigma_i$ is a link.
\end{enumerate}
\end{rem}

\begin{defn}\label{d.ts.1}
Two oriented time-like hyperlinks $L$ and $L'$ in $\bR \times \bR^3$ are time-like isotopic to each other if there is an orientation preserving continuous map $F: \bR \times \bR^3 \times [0,1] \rightarrow \bR \times \bR^3$, such that
\begin{enumerate}
  \item $F_0$ is the identity map;
  \item $F_t$ is a homeomorphism from $\bR \times \bR^3$ to $\bR \times \bR^3$;
  \item $F_1(L) = L'$;
  \item each $F_t(L)$ is a time-like hyperlink.
\end{enumerate}
In other words, two time-like hyperlinks $L_1$ and $L_2$ in $\bR \times \bR^3$ are time-like isotopic if $L_1$ can be continuously deformed to $L_2$ while remaining time-like.
 \end{defn}

\begin{rem}
By definition, $L$ and $L'$ in $\bR \times \bR^3$ are time-like isotopic to each other implies that
\begin{enumerate}
  \item $\pi_0(L)$ and $\pi_0(L')$ are ambient isotopic to each other in $\bR^3$;
  \item $\pi_i(L)$ and $\pi_i(L')$ are ambient isotopic to each other in $\bR \times \Sigma_i$, $i=1,2,3$.
\end{enumerate}
\end{rem}

Throughout this article, all our hyperlinks will be time-like and we consider equivalence classes of such hyperlinks using Definition \ref{d.ts.1}.

\begin{defn}\label{d.tau.1}
For any $\vec{p} = (p_0, p_1, p_2, p_3) \in \bR \times \bR^3$, we define $\tau(\vec{p}) = p_0$.
Given two loops $\ol$ and $\ul$, we say that $\ol < \ul$ if for any $\vec{p} \in \ol$, any $\vec{q} \in \ul$, we have that $\tau(\vec{p}) < \tau(\vec{q})$. When $\ol < \ul$, we say that the loop $\ol$ occurs before the loop $\ul$. If $\ol > \ul$, we say that the loop $\ol$ occurs after the loop $\ul$.
\end{defn}

\section{Construction of a Vector space of functionals}\label{s.cvs}

Let $V \rightarrow \bR \times \bR^3$ be a 4-dimensional vector bundle, with structure group $SO(3,1)$. This implies that $V$ is endowed with a metric, $\eta^{ab}$, of signature $(-, +, +, +)$, and a volume form $\epsilon_{abcd}$. Here, $\epsilon^{\mu \gamma \alpha \beta} \equiv \epsilon_{\mu \gamma \alpha \beta}$ is equal to 1 if the number of transpositions required to permute $(0123)$ to $(\mu\gamma\alpha\beta)$ is even; otherwise it takes the value -1.

Suppose $V$ has the same topological type as $T\bR^4$, so that isomorphisms between $V$ and $T\bR^4$ exist. Hence we may assume that $V$ is a trivial bundle over $\bR^4$. Without loss of generality, we will assume the Minkowski metric $\eta^{ab}$ is given by \beq \eta = -dx_0 \otimes dx_0 + \sum_{i=1}^3 dx_i \otimes dx_i. \nonumber \eeq

However, there is no natural choice of an isomorphism. A vierbein $e$ is a choice of isomorphism between $T\bR^4$ and $V$. It may be regarded as a $V$-valued one form, obeying a certain condition of invertibility. A spin connection $\omega_{\mu \gamma}^a$ on $V$, is anti-symmetric in its indices $\mu$, $\gamma$. It takes values in $\Lambda^2 (V)$, whereby $\Lambda^k (V)$ denotes the $k$-th antisymmetric tensor power or exterior power of $V$. The isomorphism $e$ and the connection $w$ can be regarded as the dynamical variables of General Relativity.

The curvature tensor is defined as
\beq R_{\mu\gamma}^{ab} = \partial_a \omega_{\mu \gamma}^{b} - \partial_b \omega_{\mu \gamma}^{a} + [\omega^a, \omega^b]_{\mu\gamma},\ \partial_a \equiv \partial/\partial x_a, \nonumber \eeq
or as $R = d\omega + \omega \wedge \omega$. It can be regarded as a two form with values in $\Lambda^2 (V)$.

Using the above notations, the Einstein-Hilbert action is written as \beq S_{{\rm EH}}(e, \omega) =\frac{1}{8}\int_{\bR^4} e \wedge e \wedge R =\frac{1}{8}\int_{\bR^4}\epsilon^{\mu \gamma \alpha \beta}\epsilon_{abcd}\ e_\mu^{a}e_\gamma^b R_{\alpha\beta}^{cd}. \label{e.eh.2} \eeq The expression $e \wedge e \wedge R$ is a four form on $\bR^4$ taking values in $V \otimes V \otimes \Lambda^2(V)$, which maps to $\Lambda^4(V)$. But $V$ with the structure group $SO(3,1)$ has a natural volume form, so a section of $\Lambda^4(V)$ may be canonically regarded as a function. Thus Equation (\ref{e.eh.2}) is an invariantly defined integral.
By varying Equation (\ref{e.eh.2}) with respect to $e$, we will obtain the Einstein equations in vacuum. See \cite{Witten:1988hc}.

The metric $\eta^{ab}$ on $V$, together with the isomorphism $e$ between $T\bR^4$ and $V$, gives a (non-degenerate) metric $g^{ab} = e_\mu^a e_\gamma^b \eta^{\mu\gamma}$ on $T\bR^4$. By varying Equation (\ref{e.eh.2}) with respect to the connection $\omega$, we will obtain an equation that identifies $\omega$ as the Levi-Civita connection associated with the metric $g^{ab}$.

Let $\overline{\mathcal{S}}_\kappa(\bR^4) \subset L^2(\bR^4)$ be a Schwartz space, which consists of functions of the form $f = p \cdot \sqrt{\phi_\kappa}$, whereby $p$ is a polynomial on $\bR^4$, and $\phi_\kappa(\vec{x}) = \frac{\kappa^4}{(2\pi)^2}e^{-\kappa^2|\vec{x}|^2/2}$ is the Gaussian function with variance $1/\kappa^2$.

Using the standard coordinates on $\bR^4$, let $\Lambda^1(\bR^3)$ denote the subspace in $\Lambda^1(\bR^4)$ spanned by $\{dx_1, dx_2, dx_3\}$. Define
\begin{align}
\begin{split}
L_\omega :=& \overline{\mathcal{S}}_\kappa(\bR^4) \otimes \Lambda^1(\bR^3)\otimes \mathfrak{su}(2) \times \mathfrak{su}(2), \\
L_e :=& \overline{\mathcal{S}}_\kappa(\bR^4) \otimes \Lambda^1(\bR^3)\otimes V.
\end{split}\label{e.o.1}
\end{align}

Any spin connection $\omega$ as above can be written as $\omega \equiv A^a_{\alpha\beta} \otimes dx_a\otimes \hat{E}^{\alpha\beta}$, whereby $A^a_{\alpha\beta}: \bR^4 \rightarrow \bR$ is smooth and we identify $\Lambda^2(V)$ with $\mathfrak{su}(2) \times \mathfrak{su}(2)$. Considering that this space of smooth spin connections is too big for our purpose, we need to `trim' down this space. Under axial gauge fixing, every spin connection can be gauge transformed into \beq \omega = A^i_{\alpha\beta} \otimes dx_i \otimes \hat{E}^{\alpha\beta}\in \overline{\mathcal{S}}_\kappa(\bR^4) \otimes \Lambda^1(\bR^3)\otimes \mathfrak{su}(2) \times \mathfrak{su}(2) \equiv L_\omega,\nonumber \eeq $A^i_{\alpha\beta}: \bR^4 \rightarrow \bR$ smooth, subject to the conditions \beq A^1_{\alpha\beta}(0, x^1, 0, 0) = 0,\ A^2_{\alpha\beta}(0, x^1, x^2, 0)=0,\ A^3_{\alpha\beta}(0, x^1, x^2, x^3) = 0. \label{e.r.1} \eeq There is an implied sum over repeated indices. We will drop all these restrictions on $A^i_{\alpha\beta}$.

\begin{rem}\label{r.lqg.1}
\begin{enumerate}
  \item Note that $A^{i}_{\alpha\beta} = -A^{i}_{\beta\alpha}\in \overline{\mathcal{S}}_\kappa(\bR^4)$.
  \item A $\mathfrak{su}(2)$-valued spin connection was also considered in \cite{PhysRevD.73.124038}.
\end{enumerate}
\end{rem}

Recall $e$ is $V$-valued one form. Let $\{E^\gamma\}_{\gamma=0}^3$ be a basis for $V$, using the Minkowski metric for the 4-dimensional vector space $V$. Even though $e$ is not a gauge, we will still apply axial gauge fixing argument as above. After applying gauge fixing on $e$, we consider $e: T\bR^4 \rightarrow V$, written as \beq e = B^i_\gamma \otimes dx_i \otimes E^\gamma \in \overline{\mathcal{S}}_\kappa(\bR^4) \otimes \Lambda^1(\bR^3) \otimes V  \equiv L_e. \nonumber \eeq There is an implied sum over repeated indices. Observe that $e$ is now no longer invertible. We may interpret $E^\alpha$ as generator for translation in the $E^\alpha$ direction, which corresponds to translation in the Poincare group. Note that the Lorentz group is a Lie subgroup of the Poincare Lie group. If we regard $\{\hat{E}^{0i}: i=1,2,3\}$ and $\{\hat{E}^{\tau(i)}: i=1,2,3\}$ as generators of boost and rotation respectively, then we may interpret $\{\omega, e\}$ as a connection with values in the Poincare Lie Algebra.

After applying axial gauge fixing to $\omega$ and $e$, the Einstein-Hilbert action is then computed as ($\partial_0 \equiv \partial/\partial x_0$)
\begin{align*}
S_{{\rm EH}}(e, \omega) =
\frac{1}{8}&\int_{\bR^4}\epsilon^{abcd}B^1_\gamma B^2_\mu[\hat{E}^{\gamma \mu}]_{ab} \cdot \partial_0 A^3_{\alpha\beta}[\hat{E}^{\alpha\beta}]_{cd} dx_1\wedge dx_2 \wedge dx_0 \wedge dx_3\\
+& \frac{1}{8}\int_{\bR^4}\epsilon^{abcd}B^2_\gamma B^3_\mu[\hat{E}^{\gamma \mu}]_{ab} \cdot \partial_0 A^1_{\alpha\beta}[\hat{E}^{\alpha\beta}]_{cd} dx_2\wedge dx_3 \wedge dx_0 \wedge dx_1\\
+&\frac{1}{8}\int_{\bR^4}\epsilon^{abcd}B^3_\gamma B^1_\mu[\hat{E}^{\gamma \mu}]_{ab} \cdot \partial_0 A^2_{\alpha\beta}[\hat{E}^{\alpha\beta}]_{cd} dx_3\wedge dx_1 \wedge dx_0 \wedge dx_2.
\end{align*}
We sum over repeated indices.

\begin{rem}\label{r.tr.4}
\begin{enumerate}
  \item Using the Minkowski metric $\eta$, we see that the metric $g^{ab} = B_\mu^a\eta^{\mu\gamma}B_\gamma^b$ is in fact degenerate. LQG only applies when we encounter a singularity in the metric. Quantization of the metric is avoided. See \cite{doi:10.1142/S0217732302006692}.
\item Hooft in \cite{THOOFT2001157} also mentioned about applying axial gauge fixing to the Einstein-Hilbert action. By applying axial gauge fixing, we distinguish time from space. Earlier on in Section \ref{s.sn}, we also described the foliation of space, which gives us the notion of a flow of time.
\item The reader may object to apply axial gauge fixing to $e$; after all $e$ is not a gauge and in General Relativity, $e$ defines a metric, which is non-degenerate in classical General Relativity. However, as discussed in \cite{Witten:1988hc}, we must consider $e$ to be non-invertible to make sense of or develop 2+1 quantum gravity. Likewise here, to develop a 3+1 quantum gravity, we have to consider non-invertible $e$.
  \item\label{r.tr.4a} The authors in \cite{0264-9381-21-15-R01} regarded 3-dimensional gravity as a `gauge theory' and thus they considered a Gauss constraint. This constraint is applied to consider connections modulo gauge transformations. See also \cite{Thiemann:2002nj}. This is analogous to the axial gauge fixing applied to $e$ and $\omega$. A similar gauge fixing was also done in \cite{PhysRevD.73.124038}. By considering only gauge invariant forms, the Gauss constraint is automatically satisfied.
\end{enumerate}
\end{rem}

In \cite{PhysRevLett.61.1155}, the authors described the loop representation of quantum gravity in a 3-manifold. The idea is to define a functional on an equivalence class of links, which is invariant under diffeomorphism of the 3-manifold. In other words, this functional should yield the same value for any two links, ambient isotopic to each other, and thus the functionals should produce link invariants.

But Einstein's theory of gravitation is on a 4-manifold. Any set of non-intersecting loops in $\bR \times \bR^3$ is ambient isotopic to a trivial hyperlink. Henceforth, we will consider an equivalence class of such loops, termed time-like isotopic, as described in Section \ref{s.tls}. We will now proceed to define a loop representation of quantum gravity in $\bR^4$, as described in \cite{PhysRevLett.61.1155}.

Start with two hyperlinks $\oL$ and $\uL$, the former termed as a matter hyperlink $\oL = \{\ol^u:\ u=1, \ldots, \on\}$ and the latter $\uL = \{\ul^v:\ v=1, \ldots, \un\}$ as a geometric hyperlink. Orient both hyperlinks and color the components of the matter hyperlink with a representation $\rho$. This means choose a representation $\rho_u: \mathfrak{su}(2) \times \mathfrak{su}(2) \rightarrow {\rm End}(V_u^+) \times {\rm End}(V_u^-)$ for each component loop $\ol^u$, $u=1, \ldots, \on$, in the hyperlink $\oL$. Note that we do not color the components loops in $\uL$, i.e. we do not choose a representation for $\uL$.

The spin connection $\omega$ is $\mathfrak{su}(2) \times \mathfrak{su}(2)$ valued one-form. The first copy of $\mathfrak{su}(2)$ is generated by $\{\hat{E}^{0i}\}_{i=1}^3$, which corresponds to boost in the $x_i$ direction in the Lorentz group; the second copy of $\mathfrak{su}(2)$ is generated by $\{\hat{E}^{\tau(i)}\}_{i=1}^3$, which corresponds to rotation about the $x_i$-axis in the Lorentz group. When we give a representation $\rho^\pm$ to a loop $\ol$, which we interpret as representing a particle, we are effectively assigning values to the translational and angular momentum of this particle.

Given the 2 sets of hyperlinks $\oL \equiv \{\ol^u\}_{u=1}^{\on}$ and $\uL\equiv \{\ul^v\}_{v=1}^{\un}$, there are many (infinite) ways to tangle the 2 hyperlinks together to form another hyperlink with $\on + \un$ components. The reason is because we only consider equivalence classes of hyperlinks, so it does not make sense to just consider the union of $\oL$ with $\uL$. Hence we need and will make a choice as to how they are tangled together, and denote this new hyperlink by $\chi(\oL, \uL) \equiv \chi(\{\ol^u\}_{u=1}^{\on}, \{\ul^v\}_{v=1}^{\un})$. We will also refer it as a colored hyperlink, assumed to be time-like.

Let $q$ be some real constant, known as the charge. Define
\begin{align}
\begin{split}
V(\{\ul^v\}_{v=1}^{\underline{n}})(e) :=& \exp\left[ \sum_{v=1}^{\un} \int_{\ul^v} \sum_{\gamma=0}^3 B^i_\gamma \otimes dx_i\right], \\
W(q; \{\ol^u, \rho_u\}_{u=1}^{\on})(\omega) :=& \prod_{u=1}^{\on}\Tr_{\rho_u}  \mathcal{T} \exp\left[ q\int_{\ol^u} A^i_{\alpha\beta} \otimes dx_i\otimes \hat{E}^{\alpha\beta}  \right].
\end{split}\label{e.l.7}
\end{align}
Here, $\mathcal{T}$ is the time-ordering operator as defined in \cite{CS-Lim02} and $\Tr_{\rho}$ is the matrix trace in the representation $\rho$. We sum over repeated indices, with $i$ taking values in 1, 2 and 3.

A loop representation is an Einstein-Hilbert path integral, of the form \beq \frac{1}{Z}\int_{\omega \in L_\omega,\ e \in L_e}V(\{\ul^v\}_{v=1}^{\un})(e) W(q; \{\ol^u, \rho_u\}_{u=1}^{\on})(\omega)  e^{i S_{{\rm EH}}(e, \omega)}\ De D\omega, \label{e.eh.1} \eeq whereby $De$ and $D\omega$ are Lebesgue measures on $L_e$ and $L_\omega$ respectively and \beq Z = \int_{\omega \in L_\omega,\ e \in L_e}e^{i S_{{\rm EH}}(e, \omega)}\ De D\omega. \label{e.l.8} \eeq Note that it is indexed by a parameter $\kappa > 0$. In \cite{EH-Lim02}, we made sense of this path integral. When we take the limit as $\kappa$ goes to infinity, this sequence of path integrals will define a functional, termed as a Wilson Loop observable of a colored hyperlink $\chi(\oL, \uL)$.

\begin{rem}
A similar expression was also considered in \cite{PhysRevLett.61.1155}, which was argued to give us knot invariants, the knots being embedded in $\bR^3$.
\end{rem}

In \cite{EH-Lim03} or \cite{EH-Lim06}, we defined the hyperlinking number ${\rm sk}(\ol^u, \uL)$ between an oriented time-like loop $\ol^u$ and an oriented time-like hyperlink $\uL$, computed from $\chi(\oL, \uL)$. Suppose the matter hyperlink $\oL$ is colored with  $\{\rho_u\}_{u=1}^{\on} = \{(\rho_u^+, \rho_u^-)\}_{u=1}^{\on}$. Then in \cite{EH-Lim03}, we defined the Wilson loop observable for the colored hyperlink $\chi(\oL, \uL)$ as
\begin{align*}
Z(q; & \chi(\oL, \uL) )
:= \prod_{u=1}^{\on}\left( \Tr_{\rho^+_u}\ \exp[-\pi iq\ {\rm sk}(\ol^u, \uL) \cdot \mathcal{E}^+] +
\Tr_{\rho^-_u}\ \exp[\pi iq\ {\rm sk}(\ol^u, \uL) \cdot \mathcal{E}^-] \right).
\end{align*}
This leads us to make the following definition.

\begin{defn}\label{d.z.1}
Consider the representation $\{(\rho_u^+, 0)\}_{u=1}^{\on}$, whereby $\ol^u$ is colored with the representation $(\rho_u^+, 0)$. Using the setup as above, define Expression \ref{e.eh.1} as
\begin{align*}
Z^{+}(q;  \chi(\oL, \uL) )
&:= \prod_{u=1}^{\on}\ \Tr_{\rho^{+}_u}\ \exp[-\pi iq\ {\rm sk}(\ol^u, \uL) \cdot \mathcal{E}^{+}].
\end{align*}

Now consider the representation $\{(0, \rho_u^-)\}_{u=1}^{\on}$, whereby $\ol^u$ is colored with the representation $(0, \rho_u^-)$. Similarly, we define  Expression \ref{e.eh.1} as
\begin{align*}
Z^{-}(q;  \chi(\oL, \uL) )
&:= \prod_{u=1}^{\on}\ \Tr_{\rho^{-}_u}\ \exp[\pi iq\ {\rm sk}(\ol^u, \uL) \cdot \mathcal{E}^{-}].
\end{align*}

\end{defn}

\begin{rem}
When $\rho_u \equiv (\rho_u^+,0)$, then $Z$ reduces to $Z^+$. Similarly for $\rho_u \equiv (0, \rho_u^-)$, $Z$ reduces to $Z^-$.
\end{rem}

Even though the hyperlinking number is well-defined for a pair of time-like hyperlinks $\{\oL, \uL\}$, tangled to form a time-like hyperlink $\chi(\oL, \uL)$, it is however not an invariant under the equivalence relation given in Definition \ref{d.ts.1}. To obtain an equivalence invariant, it is necessary to time-order the loops, as was done in \cite{EH-Lim06}.

\begin{defn}(Time-ordered pair of hyperlinks)\label{d.tl.2}\\
Suppose we have a time-like hyperlink, denoted as $\chi(\oL, \uL)$, consisting of 2 non-empty sets of time-like hyperlink, $\oL= \{\ol^1, \cdots, \ol^{\overline{n}}\}$ and $\uL= \{\ul^1, \cdots, \ul^{\underline{n}}\}$.

Refer to Definition \ref{d.tau.1}. Pick a component loop $\ol^u \subset \oL$ and another component loop $\ul^v \subset \uL$. We require that either $\ol^u < \ul^v$ or $\ol^u > \ul^v$. In this case, we say that the time-like hyperlink $\chi(\oL, \uL)$ consists of a time-ordered pair of time-like hyperlinks.

Let $u=1, \ldots, \on$. We write $\ol^u < \uL$ if we have $\ol^u < \ul^v$ for every $v = 1, \ldots, \un$; $\ol^u > \uL$ if $\ol^u > \ul^v$ for every $v = 1, \ldots, \un$.
\end{defn}

\begin{rem}\label{r.tr.1}
\begin{enumerate}
  \item Suppose we have a matter loop $\ol$ and a geometric loop $\ul$, both tangled together to form a time-like hyperlink $\chi(\ol,\ul)$. Suppose that $\ol < \ul$. In \cite{EH-Lim06}, we showed that the hyperlinking number ${\rm sk}(\ol, \ul)$ will be equal to $3 \times {\rm lk}(\pi_0(\ol), \pi_0(\ul))$; we will add a negative sign in front if $\ol > \ul$. Here, ${\rm lk}(\pi_0(\ol), \pi_0(\ul))$ is the linking number between knots $\pi_0(\ol)$ and $\pi_0(\ul)$.
  \item Because both $Z^+(q;  \chi(\oL, \uL) )$ and $Z^-(q;  \chi(\oL, \uL) )$ defined earlier, depend on the hyperlinking numbers between $\ol^u$ and $\uL$, we see that it is actually dependent on the time-ordering between each matter loop $\ol^u$ and each component geometric loop in $\uL$.
\end{enumerate}
\end{rem}

\begin{notation}\label{n.l.1}
Let $L(\bR^4)$ denote the space of (oriented) hyperlinks in $\bR^4$. Now the space of hyperlinks is too big for our use. Recall we have the equivalence relation defined in Definition \ref{d.ts.1}, which we will denote by $\sim$. Thus we can consider the quotient space $L(\bR^4)/\sim$.

Create 2 copies of this quotient space $L(\bR^4)/\sim$. The first copy will be denoted by $\oL(\bR^4)$,
termed as the space of matter hyperlinks; the second copy will be denoted by $\uL(\bR^4)$,  termed as the
space of geometric hyperlinks.

By abuse of notation, we will write $\oL$ to denote an equivalence class, using the time-like hyperlink $\oL$ as a representative. Similarly, $\uL$ will also denote an equivalence class, representing all time-like hyperlinks time-like isotopic to $\uL$.

From $\oL(\bR^4)$, we choose a sequence of oriented hyperlinks, which we will denote this sequence by $\{\oL_{m} \}_{m=1}^\infty$. In the sequence $\{\oL_{m} \}_{m=1}^\infty$, for each $\oL_{m} = \{\ol^{1}_m, \ldots, \ol^{\on(m)}_m \}$, we let $\on(m)$ denote the number of components in $\oL_m$ and color each component $\ol^{u}_m$ in the hyperlink $\oL_{m}$ with a representation $\rho_{m, u}:\mathfrak{su}(2)
\times \mathfrak{su}(2) \rightarrow {\rm End}(V_{m,u}^+) \times {\rm End}(V_{m,u}^-)$. It shall be noted that matter hyperlinks will always be colored.

For each matter hyperlink $\oL_{m}$, we choose an oriented geometric hyperlink $\uL_{m}\equiv \{\ul_m^1, \ldots, \ul_m^{\un(m)}\} \in \uL(\bR^4)$, and decide how they are tangled together, denoted by $\chi(\oL_m, \uL_m)$. Hence, \beq
\chi(\oL_m, \uL_m) = \{\ol^1, \ldots, \ol^{\on(m)}, \ul^1, \ldots, \ul^{\un(m)}\} \nonumber \eeq
is a new colored time-like oriented hyperlink formed from $\oL_m = \{\ol^1, \ldots, \ol^{\on(m)}\}$ and $\uL = \{\ul^1, \ldots, \ul^{\un(m)}\}$, with each component loop of $\oL_m$ tangled with $\uL_m$, represented by a time-like equivalence class, whereby each $\ol^k \in \oL_m$ is time-ordered with $\ul^r \in \uL_{m}$, for each $r=1,2, \cdots, \un(m)$, as described in Definition \ref{d.tl.2}. Note that $\un(m)$ is the number of components in $\uL_m$.

We will call the sequence $\{\uL_{m}\}_{m=1}^\infty$ as the configuration space.
\end{notation}

\begin{rem}\label{r.tr.3}
\begin{enumerate}
  \item The sequence $\{ \oL_{m}\}_{m=1}^\infty$ represents all the possible quantum states of the particles. 
  \item Recall we have $\oL$ and $\uL$ as two separate oriented hyperlinks. Note that $\oL$ should represent a set of particles in $\bR^4$, and from QFT, we have a discrete number of possible particle states. We will also interpret the projected component matter loop inside $\bR^3$, as representing the orbit of a particle. The other hyperlink, $\uL$, should be thought of as the dual to the hyperlink $\oL$.
  \item When we write $\chi(\oL_m, \uL_m)$, it is understood that it means an equivalence class $[\chi(\oL_m, \uL_m)]$. We say that $\chi(\oL_m, \uL_m)$ is equivalent to $\chi(\oL_m^\ast, \uL_m^\ast)$ if
      \begin{itemize}
        \item $\chi(\oL_m, \uL_m)$ is time-like isotopic to $\chi(\oL_m^\ast, \uL_m^\ast)$;
        \item during the time-like isotopic process, the time-ordering between each component loop in the matter hyperlink and each component loop in the geometric hyperlink is unchanged.
      \end{itemize}
  \item\label{r.tr.3a} By considering equivalence classes of time-like isotopic hyperlinks, we are in fact imposing the spatial diffeomorphism constraint.
  \item\label{r.tr.3b} For some foliation of space, we can always consider the matter loop as embedded inside $\bigcup_{\epsilon_0< t < \epsilon_1}\Sigma_t$, and the geometric loop as embedded inside $\bigcup_{\delta_0< t < \delta_1}\Sigma_t$, whereby either $\epsilon_1 < \delta_0$, or $\delta_1 < \epsilon_0$. This gives us a time-ordering between the two loops, which gives rise to causality, as we will discuss in greater detail in the rest of the article. By imposing time-ordering on the matter and geometric loops, we are in fact imposing the Hamiltonian constraint.
\end{enumerate}

\end{rem}


Represent each oriented hyperlink $\oL_{n}$ as a function $\oPsi_{n}^{\ \pm} \equiv \oPsi_{\oL_n}^{\ \pm}$, whose domain is the configuration space, denoted by $\{\uL_{m}\}_{m=1}^\infty$, whereby for each given $m$, $\uL_{m} = \{\ul_m^{v}\}_{v=1}^{\underline{n}(m)}$, $\un(m) \in \mathbb{N}$ is an oriented hyperlink. The function $\oPsi_{n}^{\ \pm}$ at $\uL_{m}$ is defined as \beq \oPsi_{n}^{\ \pm}(\uL_{m}) :=
\left\{
  \begin{array}{ll}
    Z^\pm(q;  \chi(\oL_{n}, \uL_{n})), & \hbox{$m=n$;} \\
    0, & \hbox{$m \neq n$,}
  \end{array}
\right. \nonumber \eeq $Z^\pm(q; \cdot)$ was defined in Definition \ref{d.z.1}. In our previous work \cite{EH-Lim06}, we showed that the hyperlinking number is an equivalence invariant for a time-like oriented hyperlink $\chi(\oL_{n}, \uL_{n})$, as long as during the time-like isotopy, the time-ordering and orientation is preserved. Thus, $Z^\pm(q; \cdot)$ is well-defined. See also Remark \ref{r.tr.1}.

Written as an Einstein-Hilbert path integral, we have
\begin{align*}
\oPsi_{m}^{\ +}(\uL_{m}) :=& \lim_{\kappa \rightarrow \infty}\int_{\omega \in L_\omega,\ e \in L_e}V(\uL_{m})(e) W(q; \{\ol_m^u, (\rho_{m,u}^{+}, 0)\}_{u=1}^{\on(m)})(\omega)  e^{i S_{{\rm EH}}(e, \omega)}\ De D\omega, \\
\oPsi_{m}^{\ -}(\uL_{m}) :=& \lim_{\kappa \rightarrow \infty}\int_{\omega \in L_\omega,\ e \in L_e}V(\uL_{m})(e) W(q; \{\ol_m^u, (0,\rho_{m,u}^{-} )\}_{u=1}^{\on(m)})(\omega)  e^{i S_{{\rm EH}}(e, \omega)}\ De D\omega,
\end{align*}
computed using the colored hyperlink $\chi(\oL_{m}, \uL_{m})$. Note that $L_\omega$ and $L_e$ were defined by Equations (\ref{e.o.1}), and dependent on $\kappa$.

On the flip side, we can consider $\{\oL_{n}\}_{n=1}^\infty$ as the dual configuration space. And we represent each oriented hyperlink $\uL_{m}$ now as a function $\uPsi_{m}^{\ \pm} \equiv \uPsi_{\uL_m}^{\ \pm}$, acting on this dual space. It is no surprise that we can define this functional as \beq \uPsi_{m}^{\ \pm}(\oL_{n}) :=
\left\{
  \begin{array}{ll}
    Z^\pm(q;  \chi(\oL_{m}, \uL_{m})), & \hbox{$n = m$;} \\
    0, & \hbox{$n\neq m$.}
  \end{array}
\right.
 \nonumber \eeq

\begin{defn}(Phase space)\\
Let $\{(\uL_{m}, \oL_m)\}_{m=1}^\infty$ be known as the phase space.

Let $\oV^{\ \pm}$ be the vector space, spanned by the basis $\{\oPsi_{n}^{\ \pm}\}_{n=1}^\infty$. This will be the vector space of real-valued functionals acting on the configuration space. 

Let $\uV^{\ \pm}$ be the vector space, spanned by the basis $\{\uPsi_{m}^{\ \pm}\}_{m=1}^\infty$. This will be the vector space of real-valued functionals acting on the dual configuration space. 

Let $V^\pm$ be the subspace inside $\oV^{\ \pm} \otimes \uV^{\ \pm}$, spanned by the basis $\{\oPsi_m^{\ \pm} \otimes \uPsi_{m}^{\ \pm}\}_{m=1}^\infty$. This is a vector space of functionals acting on the phase space $\{(\uL_{m}, \oL_m)\}_{m=1}^\infty$, by \beq \left(\oPsi_n^{\ \pm} \otimes \uPsi_{n}^{\ \pm} \right)(\uL_m, \oL_m) = \oPsi_n^{\ \pm}(\uL_m) \uPsi_{n}^{\ \pm}(\oL_m). \nonumber \eeq
\end{defn}

\begin{defn}(Construction of Quantum Loop Space)\label{d.qls}\\
The set $\oL := \{\oL_{m}\}_{m=1}^\infty$ is a discrete space, and we will equip it with a probability counting measure, denoted by $N: \mathbb{N} \rightarrow [0,1]$, such that $\sum_{m=1}^\infty N(m) = 1$.

Define an inner product on $\oV^{\ \pm}$ by \beq \left\langle \oPsi_{m}^{\ \pm}, \oPsi_{\bar{m}}^{\ \pm} \right\rangle
:= \sum_{n=1}^\infty \oPsi_{m}^{\ \pm}(\uL_{n})\oPsi_{\bar{m}}^{\ \pm}(\uL_{n})N(n). \nonumber \eeq Complete $\oV^{\ \pm}$ into a Hilbert space $\mathcal{H}(\oV^{\ \pm})$.

Similarly, complete $\uV^{\ \pm}$ into a Hilbert space $\mathcal{H}(\uV^{\ \pm})$ with the inner product \beq \left\langle \uPsi_{m}^{\ \pm}, \uPsi_{\bar{m}}^{\ \pm} \right\rangle
:= \sum_{n=1}^\infty \uPsi_{m}^{\ \pm}(\oL_{n})\uPsi_{\bar{m}}^{\ \pm}(\oL_{n})N(n). \nonumber \eeq

Finally, complete $V^\pm$ into a Hilbert space  $\mathcal{H}(V^\pm)$ using the inner product \beq \left\langle \oPsi_m^{\ \pm} \otimes \uPsi_{m}^{\ \pm}, \oPsi_{\bar{m}}^{\ \pm} \otimes \uPsi_{\bar{m}}^{\ \pm} \right\rangle := \sum_{n=1}^\infty \oPsi_m^{\ \pm}(\uL_n)\oPsi_{\bar{m}}^{\ \pm}(\uL_n)\uPsi_m^{\ \pm}(\oL_n)\uPsi_{\bar{m}}^{\ \pm}(\oL_n)N(n). \nonumber \eeq
\end{defn}

\begin{rem}
\begin{enumerate}
  \item The reader should think of $N(m)$ as the probability for a set of particles to be represented by $\oL_m$. Note that the number of particles is not conserved in the dual configuration space.
  \item In the construction, we see that both $\{\oPsi_{n}^{\ \pm}\}_{n=1}^\infty$ and $\{\uPsi_{m}^{\ \pm}\}_{m=1}^\infty$ are bases for the Hilbert spaces $\mathcal{H}(\oV^{\ \pm})$ and $\mathcal{H}(\uV^{\ \pm})$ respectively. Thus, all the Hilbert spaces we have constructed are separable. By their construction, both are orthogonal bases for the chosen counting measure.
  \item The reason why we made the bases orthogonal is because later on, we will show that they will form an eigenbasis for the operators we are interested in. Because operators corresponding to observables must be Hermitian operators by a postulate in QM, hence we made the bases orthogonal.
\end{enumerate}

\end{rem}

In  QM, one would define a time-evolution operator, which is unitary, hence the probability is conserved as the states evolve with time. But in LQG, there is no absolute notion of time, so one should not attempt to define a time-evolution operator. One way to remedy this situation, as proposed by Hooft, would be to define an equivalence relation, whereby two states are deemed to be equivalent if they evolve into a same state in the future. One then constructs a Hilbert space, whereby each element in a basis is identified with such an equivalence class. For details, refer to  \cite{THOOFT2001157}.

In our construction, time-like equivalence and time-ordering ensure that a hyperlink $\chi(\oL, \uL)$, will be equivalent to the same hyperlink, be it translated forward or backward in time. This means that our equivalence relation implies the relation suggested by Hooft, but not conversely. And the preceding probability measure defined on the equivalence class $[\oL_{m}]$ or $[\uL_{m}]$, represented by a hyperlink $\oL_{m}$ or $\uL_m$ respectively, will not change under translation in time.

In the absence of a space-time metric, time and space are ambiguous. However, from our definition of the hyperlinking number, one sees that time is distinguished from space. Earlier on, we mentioned that the hyperlinking number will be an invariant, if we consider an equivalence class of a pair of time-like hyperlinks $\chi(\oL, \uL)$, with the additional condition that each component matter loop is time-ordered with a component geometric loop. One should then interpret causality as a consequence of time-ordering. Thus, we only consider diffeomorphisms of space-time that preserves causality. In a topological theory on $\bR^4$, there is no distinguished axis that is labelled as time, hence there will be no notion of causality. Considering time-like and time-ordered pair of loops, is not a flaw in the LQG theory, but rather it preserves the notion of time as an external parameter in QM, and causality in Special Relativity. And despite this restriction, it still keeps the malleability of space-time, as dictated in GR.

\section{Observables}\label{s.obs}

In the previous section, we constructed the phase space. Notice that we used (time-like) hyperlinks to construct the phase space, which are geometrical objects. In loop quantum gravity, there is no preferred metric, so we need to consider equivalence classes of a hyperlink. Hence, when we choose two equivalence classes of hyperlink $\oL$ and $\uL$, we need to make a choice as to how they are tangled. Furthermore, we time-ordered a pair of matter and geometric component loops in Definition \ref{d.tl.2}.

If a matter loop occurs before a geometric loop, one can interpret physically as saying that the matter loop $\ol$ will produce a geometric loop $\ul$, that tangles with it. Furthermore, each matter hyperlink $\oL$ can only produce one geometric hyperlink, up to time-like equivalence. The tangled hyperlink $\chi(\oL, \uL)$ is an equivalence class of hyperlink, consisting of time-ordered pair of hyperlinks. By definition of the equivalence class, if a matter loop precedes before a geometric loop, then any ambient isotopy of these loops must maintain that the transformed matter loop is the cause, the transformed geometric loop is the effect. Thus, causality is preserved under the equivalence relation.

In this section, we would like to discuss the observables that one can define on the phase space. Interestingly enough and to no surprises, the observables we are going to consider will be geometrical objects.

\begin{defn}\label{d.s.1}
Let $l$ be a time-like loop and $S = \bigcup_{k=1}^n S_k$, whereby $S_k$ is a connected compact surface with boundary $\partial S_k$. We say that $l$ is time-ordered with $S$, if $\tau(\vec{x}) < \tau(\vec{y})$ or $\tau(\vec{x}) > \tau(\vec{y})$, for every $\vec{x} \in l$ and $\vec{y} \in S_k$. We will denote this relation as $l < S_k$ ($l > S_k$) for the former (latter), by abuse of notation.
\end{defn}

On the configuration space $\{\uL_m\}_{m=1}^\infty$, choose an orientable compact surface, $\overline{S} \subset \{0\} \times \bR^3$, thus $\overline{S}$ is a Seifert surface for $\partial\overline{S}$, if $\overline{S}$ has non-empty boundary. The surface $\overline{S}$ is represented up to ambient isotopy in $\bR \times \bR^3$ and disjoint from the geometric hyperlinks in the configuration space. We also assume that its boundary $\partial \overline{S}$, if any, is a time-like hyperlink.

Pick a geometric hyperlink $\uL_m$ from the configuration space, disjoint from $\overline{S}$, and we need to make a choice as to how $\overline{S}$ and $\uL_m$ are `linked together'. Each component geometric loop must be time-ordered with each connected component surface with boundary in $\overline{S}$. If $\overline{S}$ has no boundary, then time-ordering is not required.

Write $\{\overline{S}, \emptyset , \chi(\oL_m,\uL_m)\}$ to denote a time-like triple, as defined in the last section of \cite{EH-Lim06}. If we orientate the hyperlink and the surface, then this oriented time-like triple represents the equivalence class of these oriented geometrical objects in $\bR \times \bR^3$. On this time-like oriented triple, the linking number between $\overline{S}$ and $\uL_m$, denoted as ${\rm lk}(\uL_m, \overline{S})$, as defined in \cite{EH-Lim06}, and the hyperlinking number between $\oL_m$ and $\uL_m$, are time-like equivalence invariants, preserving the time-ordering.

A piercing is formed when an arc in the projected loop $\pi_0(l)$ intersects the projected surface $\pi_0(S)$, without being tangent to the surface. The linking number is then computed by adding up all the algebraic numbers of these piercings.

In \cite{EH-Lim06}, we showed that when the surface has no boundary, then the linking number between the oriented closed surface and an oriented loop is invariant under ambient isotopy. But, if the surface has a boundary, then the linking number between the oriented surface and the oriented loop, is not a topological invariant. Under the above equivalence relation, a time-ordering needs to be imposed and suppose $l < S$ for a loop $l$ and a connected surface $S$. Then, we only consider ambient isotopy whereby during the ambient isotopic process, time-ordering remains unchanged and it further shows that
\begin{itemize}
  \item the time-like oriented hyperlink formed from $l$ and $\partial S$, is time-like isotopic to the time-like oriented hyperlink formed from $l'$ and $\partial S'$;
  \item $S$ is ambient isotopic to $S'$.
\end{itemize} 
Thus, causality is respected under the equivalence relation.

When $S$ has no boundary, no time-ordering is required under the equivalence relation. But, we saw in \cite{EH-Lim06} that indeed a time-ordering is implicitly implied and preserved, under ambient isotopy of the loop and a connected surface. When a piercing is formed before (after) the time duration for which the surface is formed, we will refer it as a left (right) piercing. Using ambient isotopy, this implicitly defined time-ordering looks at whether all projected oriented arcs going from a left to a right piercings, lie either in the interior or exterior of $\pi_0(S) = S \subset \bR^3$, during the time duration for which the connected surface is formed. By simply reversing the orientation of the loop, we can change this time-ordering. Thus, the linking number between a connected closed surface and a loop is dependent on this implicitly defined time-ordering. This time-ordering is of course preserved during any ambient isotopy.

On the dual configuration space $\{\oL_n\}_{n=1}^\infty$, we will again choose 2 geometrical objects as observables; an orientable surface $\underline{S} \subset \bR^3$ and a compact (3-dimensional) region $R \subset \bR^3$. We will identify both geometrical objects inside $\bR \times \bR^3$ by $\{0\} \times \underline{S}$ and $\{0\} \times R$ respectively and note that we choose representatives for this surface and solid region, up to ambient isotopy in $\{0\} \times \bR^3$ and $\bR^3$ respectively.

Each matter hyperlink $\oL_m$ from the dual configuration space must be disjoint from $\underline{S}$ and $R$. For a matter hyperlink, we need to choose a frame for each component loop in $\oL_m$, so that $\pi_0(\oL_m)$ is a framed link in $\bR^3$. The framing on the link $\pi_0(\oL_m)$ will give rise to nodes on it. See \cite{EH-Lim06}.

We now need to make a choice as to how the given oriented surface $\underline{S}$ and the oriented matter hyperlink $\oL_m$ are linked. Again, time-ordering needs to be imposed, when $\underline{S}$ has an non-empty boundary, assumed to be a time-like hyperlink. As for $R$, we need to decide how many nodes on $\pi_0(\oL_m)$ lie inside $R$.

Once decided, form an oriented time-like triple $\{\underline{S}, R, \chi(\oL_m, \uL_m)\}$, as defined in \cite{EH-Lim06}. Note that $\underline{S}$ and $R$ are represented as an ambient isotopic class in $\bR^3$. This oriented time-like triple represents an equivalence class of these geometrical objects. The piercing number $\nu_{\underline{S}}(\oL_m)$ counts the number of piercings formed between $\pi_0(\oL_m)$ and $\pi_0(\underline{S})$. And the confinement number, denoted by $\nu_R(\oL_m)$, counts how many nodes on $\pi_0(\oL_m)$, lie in the interior of $R$. On this triple, the piercing and confinement numbers are time-like, preserving time-ordering equivalence invariants. For more details, refer to \cite{EH-Lim06}.

\begin{rem}\label{r.tr.2}
Let $\vec{x} = (x_0, x)$ be a point on a loop in $\bR^4$, which projects down to a node $\pi_0(\vec{x})$ on a framed knot in $\bR^3$. Assume that this node lies inside the region $R$. Because the loop does not intersect with $\{0\} \times R$, we have either $x_0 > 0$ or $x_0 < 0$. Suppose $x_0 > 0$. Under the equivalence relation, when the framed loop is transformed under time-like isotopy, then $\vec{x}$ will be mapped to a point $\vec{y}$, such that
\begin{itemize}
  \item $\pi_0(\vec{y})$ will be a node on the transformed knot, and it must remain in the interior of the region $R$;
  \item during the time-like isotopy, any transformed loop must remain disjoint from the region $R$ and nodes on projection, must always remain disjoint from the boundary of $R$.
\end{itemize}
Thus, we also must have $y_0 > 0$. In other words, those nodes on the projected loop with time component positive, will be mapped to nodes on the transformed knot, with time component positive. Therefore, one can say that the nodes are implicitly time-ordered and under the equivalence relation, this time-ordering does not change, again implying that causality is preserved.
\end{rem}

Our description of the geometrical objects in consideration is now complete and we remind the reader, that they are represented, inside $\bR \times \bR^3$, up to time-like isotopy, preserving time-ordering. We will introduce physical quantities that can be computed using the above described oriented time-like triple. Because there is no preferred metric, the quantities that one is able to compute from these geometrical objects must be invariants under the equivalence relation.

In the next section, we will introduce observables, which will in turn define quantum operators, whose eigenvalues are computed using these invariants.

\section{Quantized operators}\label{s.qo}

To develop a quantum theory of gravity, we need to quantize certain observables. Area and volume are observables that depend on the vierbein $e$; the curvature depends on the spin connection $\omega$.

We need to find a basis for $\mathcal{H}(\oV^{\ \pm})$, such that they form an eigenbasis for the area operator $\hat{A}$ and for the volume operator $\hat{V}$. Then, the respective eigenvalues will be the quantized values for area and volume. Similarly, we need to find a basis for $\mathcal{H}(\uV^{\ \pm})$, such that they form an eigenbasis for the quantized curvature operator $\hat{F}$, and its eigenvalues will be the quantized values for curvature.

Recall in Section \ref{s.cvs}, we wrote down Expression \ref{e.eh.1} using the functionals $V$ and $W$. We can now proceed to quantize the observables using the Einstein-Hilbert path integral.

\subsection{Area operator}\label{ss.ao}

Let $S$ be an orientable compact surface inside the spatial subspace $\bR^3 \hookrightarrow \bR \times \bR^3$, disjoint from the matter hyperlink. Because we can consider ambient isotopy of $S$ in $\bR^3$, we assume that $S$ is inside $x_2-x_3$ plane. Furthermore, we insist that $\pi_0(\oL)$ intersect the surface $S$ at most finitely many points. Using the dynamical variables $\{B_\mu^a\}$ and the Minkowski metric $\eta^{ab}$, we see that the metric $g^{ab} \equiv B^a_\mu\eta^{\mu\gamma}B^b_\gamma$ and the
area is given by \beq {\rm Area\ of}\ S(e) := A_S(e) \equiv \int_S \sqrt{g^{22}g^{33} - (g^{23})^2} \ dA. \nonumber \eeq

In \cite{EH-Lim03}, we quantized the area of $S$ into an operator $\hat{A}_S$ using the following path integral expression (indexed by a parameter $\kappa$)
\beq \frac{1}{Z}\int_{\omega \in L_\omega,\ e \in L_e}V(\{\ul^v\}_{v=1}^{\un})(e) W(q; \{\ol^u, \rho_u\}_{u=1}^{\on})(\omega)A_S(e)\  e^{i S_{{\rm EH}}(e, \omega)}\ De D\omega, \label{e.eha.1} \eeq $V$ and $W$ as defined in Equation (\ref{e.l.7}) and $Z$ is a normalization constant given by Equation (\ref{e.l.8}). Note that $\{\rho_u^\pm\}_{u=1}^{\on}$ is a representation for the hyperlink $\oL$.

The main theorem in \cite{EH-Lim03} says that the limit of this sequence of path integrals in Expression \ref{e.eha.1} as $\kappa$ goes to infinity, can be computed from the number of times the projected hyperlink $\pi_0(\oL)$ pierce the surface $S$, weighted by the momentum corresponding to the representation $\rho_u$. When $\rho_u = \rho_u^\pm$, we will write this limit as $\hat{A}_S[Z^\pm(q; \chi(\oL, \uL))]$, for which the reader may recall, we defined $Z^\pm(q; \chi(\oL, \uL))$ in Definition \ref{d.z.1}.

Recall in Section \ref{s.cvs}, for a colored hyperlink $\oL$, we can define a functional $\oPsi_{\oL}^{\ \pm}$ acting on the configuration space. In \cite{EH-Lim06}, we defined the piercing number $\nu_S(l)$ between a compact surface $S$ in $\{0\} \times \bR^3$, with or without boundary, and a loop $l$. This is a well-defined invariant, up to time-like isotopy, preserving time-ordering. It counts the number of times $\pi_0(l)$ intersects the surface $S$ in $\bR^3$, the intersection points termed as piercings. We always choose a representative of $l$ and $S$ such that the $\pi_0(l)$ and $S$ in $\bR^3$ have the minimum number of piercings, which gives us $\nu_S(l)$.

We have the following corollary from the main theorem in \cite{EH-Lim03}.

\begin{cor}\label{c.a.1}
Given an orientable compact surface $S \subset \{0\} \times \bR^3 $, quantization of the area of $S$, gives an operator $\hat{A}_S$, which  acts on functionals in $\oV^{\ \pm}$.

Suppose $\oPsi_{\oL}^{\ \pm} \in \oV^{\ \pm}$, with $\oL = \{\ol^u\}_{u=1}^{\on}$. Then \beq \hat{A}_{S} \oPsi_{\oL}^{\ +} = \frac{|q|\sqrt\pi}{2} \left[
\sum_{u=1}^{\on}\nu_{S}(\ol^u)  \sqrt{\xi_{\rho_u^+}}\right]\oPsi_{\oL}^{\ +} \nonumber \eeq and \beq \hat{A}_S \oPsi_{\oL}^{\ -} = i\frac{|q|\sqrt\pi}{2} \left[
\sum_{u=1}^{\on}\nu_{S}(\ol^u)\sqrt{\xi_{\rho_u^-}}\right]\oPsi_{\oL}^{\ -}. \nonumber \eeq

Hence, the canonical basis $\{\oPsi_{n}^{\ \pm}\}_{n=1}^\infty$ is an eigenbasis for $\hat{A}_S$. Furthermore, the basis is orthogonal using the counting measure $N$.
\end{cor}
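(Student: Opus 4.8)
The plan is to read the corollary as a direct transcription of the main theorem of \cite{EH-Lim03} into the language of the functionals $\oPsi_{\oL}^{\ \pm}$, followed by an elementary verification of orthogonality; the genuine analytic content (the evaluation of the area-weighted path integral) is precisely what is being cited, so the work here is organizational.

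First I would invoke the main theorem of \cite{EH-Lim03} in each of the two coloring cases. Coloring every $\ol^u$ with $(\rho_u^+,0)$ and inserting the area $A_S(e)$ into the Einstein--Hilbert path integral of Expression \ref{e.eha.1}, the theorem identifies the $\kappa\to\infty$ limit as
\beq
\hat{A}_S[Z^+(q; \chi(\oL, \uL))] = \frac{|q|\sqrt\pi}{2}\left[\sum_{u=1}^{\on}\nu_S(\ol^u)\sqrt{\xi_{\rho_u^+}}\right] Z^+(q; \chi(\oL, \uL)),
\nonumber
\eeq
with the analogous identity for the coloring $(0,\rho_u^-)$, the extra factor $i$ there arising from the $+\pi iq$ sign convention of $Z^-$ in Definition \ref{d.z.1} together with the skew-Hermitian structure of $\rho^-(\mathcal{E}^-)$. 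The decisive observation to extract is that the scalar prefactor is assembled only from the piercing numbers $\nu_S(\ol^u)$ and the Casimir data $\xi_{\rho_u^\pm}$; it does not depend on the geometric hyperlink $\uL$ sitting in the argument.

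Next I would transfer this to the functionals. Since $\oPsi_{\oL_n}^{\ \pm}(\uL_m)$ equals $Z^\pm(q;\chi(\oL_n,\uL_n))$ when $m=n$ and vanishes otherwise, and $\hat{A}_S$ acts pointwise on the configuration space through the path integral, one gets $(\hat{A}_S\oPsi_{\oL_n}^{\ \pm})(\uL_m) = \hat{A}_S[Z^\pm(q;\chi(\oL_n,\uL_n))]$ for $m=n$ and $0$ for $m\neq n$. By the preceding step this diagonal value is $\lambda_n^\pm\cdot Z^\pm(q;\chi(\oL_n,\uL_n))$, where $\lambda_n^+=\frac{|q|\sqrt\pi}{2}\sum_{u}\nu_S(\ol_n^u)\sqrt{\xi_{\rho_{n,u}^+}}$ (and the $i$-twisted analogue for $\lambda_n^-$) is a constant independent of $m$. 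Because the same constant multiplies the value at every point of the configuration space, $\hat{A}_S\oPsi_{\oL_n}^{\ \pm}=\lambda_n^\pm\oPsi_{\oL_n}^{\ \pm}$ as functionals, which is exactly the asserted eigenvalue equation and exhibits $\{\oPsi_n^{\ \pm}\}_{n=1}^\infty$ as an eigenbasis. I would also note here that $\nu_S(\ol^u)$ is a time-like-isotopy invariant that preserves time-ordering, so $\lambda_n^\pm$ is well-defined on the equivalence class representing $\oL_n$ and the eigenvalue equation descends to the quotient.

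Finally, orthogonality follows from the support of the basis rather than from the operator: in
\beq
\left\langle \oPsi_{m}^{\ \pm}, \oPsi_{\bar{m}}^{\ \pm} \right\rangle = \sum_{n=1}^\infty \oPsi_{m}^{\ \pm}(\uL_{n})\oPsi_{\bar{m}}^{\ \pm}(\uL_{n})N(n),
\nonumber
\eeq
the factor $\oPsi_{m}^{\ \pm}(\uL_n)$ forces $n=m$ while $\oPsi_{\bar{m}}^{\ \pm}(\uL_n)$ forces $n=\bar{m}$, so the sum is empty unless $m=\bar{m}$; hence $\{\oPsi_n^{\ \pm}\}$ is orthogonal for the counting measure $N$. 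I expect no serious obstacle, the cited theorem doing all the heavy lifting; the only point requiring care is to confirm that the eigenvalue is genuinely constant across the configuration space (so that the action is diagonal as an identity of functionals, with the off-diagonal entries vanishing on both sides) rather than merely a pointwise scalar at the matching configuration.
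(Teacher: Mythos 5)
Your proposal is correct and follows the same route as the paper: the paper states Corollary \ref{c.a.1} precisely as a transcription of the main theorem of \cite{EH-Lim03} (the $\kappa\to\infty$ limit of Expression \ref{e.eha.1} giving the piercing-number/Casimir eigenvalue), with the eigenbasis property coming from the single-point support of each $\oPsi_{n}^{\ \pm}$ and the orthogonality being built into the construction of the counting-measure inner product in Definition \ref{d.qls}. Your only additions — making explicit that the eigenvalue is independent of $\uL$ and hence constant over the configuration space, and spelling out the disjoint-support argument for orthogonality — are exactly the organizational details the paper leaves implicit.
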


\begin{rem}
The eigenvalues are either purely real or purely imaginary.
\end{rem}

Note that $\nu_S(\ol^u)$ is computed from the oriented time-like triple $\{S, \emptyset, \chi(\oL,\uL)\}$ and the eigenvalues of $\hat{A}_S$ are invariant under diffeomorphism of $\bR \times \bR^3$, provided it respects time-like isotopy and also preserves time-ordering.

\begin{rem}\label{r.to.1}
Time-ordering between $S$ and $\oL$ is only necessary when $S$ has a boundary. When a connected surface $S$ has no boundary, a time-ordering is implicitly defined, as discussed in Section \ref{s.obs}. In the event that we have to time-order the surface $S$ with the component matter loops, note that the piercing number is independent of any time-ordering, as it only counts the total number of piercings, without taking into account of their algebraic sign. So the Area operator is actually independent of any time-ordering imposed or implicitly defined.
\end{rem}

The term $\sqrt{\xi_\rho}$ appears in the expression and it is interpreted as total momentum for representation $\rho$. So, the eigenvalues of the area operator gives us the total momentum impact by the matter hyperlink on a surface $S$, taking into account momentum coming from either boost or rotation. As an application of this result, we will derive the Bekenstein-Hawking expression for the entropy of a Schwarzschild black hole in Section \ref{s.bh}.

In the derivation of the path integral for the area, it was necessary to write the surface $S$ as a finite disjoint union of smaller surfaces. In \cite{EH-Lim03}, we showed that the main result was independent of how we partition the surface. Suppose we write $S = \bigcup_{m} S_m$. As the result says that the area is given by counting the number of piercings between $\pi_0(l)$ and $S$, we see that each smaller surface $S_m$, either has one quanta of area, or zero quanta of area, which corresponds to the projected loop piercing the surface once, or there is no piercing respectively. This means that if $S_m$ contains one piercing, any local surface in $S_m$ containing the piercing, will have the same quanta of area as $S_m$.

\subsection{Volume operator}\label{ss.vo}

Fix a closed and bounded 3-manifold $R \subset \bR^3$, possibly disconnected with finite number of components, disjoint from the matter hyperlink.  Henceforth, we will refer to $R$ as a compact solid region. Its boundary is a closed (compact without boundary) surface. Using the dynamical variables $\{B_\mu^a\}$ and the Minkowski metric $\eta^{ab}$, we see that the metric $g^{ab} \equiv B^a_\mu\eta^{\mu\gamma}B^b_\gamma$ and the volume $V_R$ is given by \beq V_R(e) := \int_R \sqrt{\epsilon_{ijk}\epsilon_{\bar{i}\bar{j}\bar{k}}g^{i\bar{i}}g^{j\bar{j}}g^{k\bar{k}}}. \nonumber \eeq

In \cite{EH-Lim04}, we quantized the volume of $R$ into an operator $\hat{V}_R$ using the following path integral expression (indexed by a parameter $\kappa$) \beq \frac{1}{Z}\int_{\omega \in L_\omega,\ e \in L_e}V(\{\ul^v\}_{v=1}^{\un})(e) W(q; \{\ol^u, \rho_u\}_{u=1}^{\on})(\omega)V_R(e)\  e^{i S_{{\rm EH}}(e, \omega)}\ De D\omega, \label{e.v.11} \eeq whereby $V$ and $W$ were defined in Equation (\ref{e.l.7}) and $Z$ is a normalization constant given by Equation (\ref{e.l.8}).

A framed hyperlink when projected in $\bR^3$ using $\pi_0: \bR \times \bR^3 \rightarrow \bR^3$, gives us a framed link. Imagine adding nodes to each component knot inside the framed link. Each node has an algebraic number $\{\pm1\}$ assigned to it and we always assume that all the nodes on a component knot have the same sign. Equivalence class of framed loop $l$ allows us to move the set of nodes in the knot $\pi_0(l)$ and we denote this equivalence class of finite points by ${\rm Nd}(\pi_0(l))$.

The main theorem in \cite{EH-Lim04} says that the limit of the sequence of path integrals in Expression \ref{e.v.11} as $\kappa$ goes to infinity, is computed by adding up all the nodes on a projected framed hyperlink, which lie in the interior of $R$, weighted by the Casimir operator corresponding to the representation $\rho_u$. If we choose the representation $\{\rho_u^\pm\}_{u=1}^{\on}$ for the hyperlink $\oL$, then we will write this limit as $\hat{V}_R[Z^\pm(q; \chi(\oL, \uL))]$, $Z^\pm(q; \chi(\oL, \uL))$ was defined in Definition \ref{d.z.1}.

In \cite{EH-Lim06}, we defined the confinement number $\nu_R(l)$, which counts the number of nodes in ${\rm Nd}(\pi_0(l))$ which are in the interior of $R$. Under the time-like isotopy, as defined in \cite{EH-Lim06}, $\nu_R(l)$ is an invariant.

Recall for a colored hyperlink $\oL$, we can define a functional $\oPsi_{\oL}^{\ \pm}$ acting on a configuration space. As a consequence, we have the following corollary from the main theorem in \cite{EH-Lim04}.

\begin{cor}\label{c.vo.1}
Let $R\subset \bR^3$ be a compact solid region, possibly disconnected with finite components and let $\partial R$ denote its boundary, which is closed. Quantization of the volume of the region $R$, will give us an operator $\hat{V}_R$, which acts on functionals in $\oV^{\ \pm}$.

Suppose $\oPsi_{\oL}^{\ \pm} \in \oV^{\ \pm}$. Then, \beq \hat{V}_R \oPsi_{\oL}^{\ +} = \frac{q^2\pi^{3/2}}{2}\left[ \sum_{u=1}^{\on} \nu_R(\ol^u)
\xi_{\rho_u^+}  \right]\oPsi_{\oL}^{\ +} \nonumber \eeq and \beq \hat{V}_R \oPsi_{\oL}^{\ -} = \frac{q^2\pi^{3/2}}{2}\left[\sum_{u=1}^{\on} \nu_R(\ol^u)
\xi_{\rho_u^-} \right]\oPsi_{\oL}^{\ -}. \nonumber \eeq

Hence, the canonical basis $\{\oPsi_{n}^{\ \pm}\}_{n=1}^\infty$ is an eigenbasis for $\hat{V}_R$. Furthermore, the basis is orthogonal using the counting measure $N$.
\end{cor}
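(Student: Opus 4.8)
The plan is to deduce the corollary directly from the main theorem of \cite{EH-Lim04}, in exactly the same way that Corollary \ref{c.a.1} follows from the area theorem of \cite{EH-Lim03}. By construction the operator $\hat{V}_R$ acts on a basis functional $\oPsi_{\oL}^{\ \pm}$ by inserting $V_R(e)$ into the Einstein-Hilbert path integral, Expression \ref{e.v.11}, and then letting $\kappa \to \infty$; evaluated against a configuration $\uL_m$ this is precisely $\hat{V}_R[Z^\pm(q;\chi(\oL,\uL_m))]$. So the first step is to record that, for $\oL = \{\ol^u\}_{u=1}^{\on}$ colored by $\{\rho_u^\pm\}_{u=1}^{\on}$, the cited theorem evaluates this limit as $\frac{q^2\pi^{3/2}}{2}\sum_{u=1}^{\on} n_u\,\xi_{\rho_u^\pm}$, where $n_u$ counts the nodes of $\pi_0(\ol^u)$ lying in the interior of $R$, the Casimir eigenvalue $\xi_{\rho_u^\pm}=j_{\rho_u^\pm}(j_{\rho_u^\pm}+1)$ enters from the identity recorded in Section \ref{s.su.1}, and the prefactor $\frac{q^2\pi^{3/2}}{2}$ collects the Gaussian normalizations surviving the limit.

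Next I would identify $n_u$ with the confinement number $\nu_R(\ol^u)$. This is where the invariance results of \cite{EH-Lim06} are invoked: $\nu_R(\ol^u)$ is well-defined up to time-like isotopy preserving time-ordering, and Remark \ref{r.tr.2} guarantees that a node lying in the interior of $R$ stays interior under the equivalence relation, so the integer produced by the path integral is exactly $\nu_R(\ol^u)$. The crucial structural observation --- and the only point needing care --- is that this scalar factor depends solely on the matter hyperlink $\oL$, through its confinement numbers and colorings, and not on the configuration-space argument $\uL_m$, because $V_R(e)$ probes only the nodes of the projected matter hyperlink inside $R$. Writing $\lambda_{\oL}^\pm := \frac{q^2\pi^{3/2}}{2}\sum_{u=1}^{\on}\nu_R(\ol^u)\xi_{\rho_u^\pm}$, we then obtain $(\hat{V}_R\oPsi_{\oL}^{\ \pm})(\uL_m) = \lambda_{\oL}^\pm\,\oPsi_{\oL}^{\ \pm}(\uL_m)$ for every $m$, so $\hat{V}_R$ preserves the support structure of the basis functional instead of mixing it with other basis elements; this is exactly the claimed eigenvalue equation.

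Applying this to each member of the canonical basis $\{\oPsi_{n}^{\ \pm}\}_{n=1}^\infty$ shows every $\oPsi_{n}^{\ \pm}$ is an eigenfunction of $\hat{V}_R$, so the canonical basis is an eigenbasis. Orthogonality with respect to the counting measure $N$ is then immediate and not special to $\hat{V}_R$: for the inner product of Definition \ref{d.qls} the functionals $\oPsi_{m}^{\ \pm}$ and $\oPsi_{\bar m}^{\ \pm}$ have disjoint supports in the configuration space when $m\neq\bar m$, so the pairing $\sum_{n}\oPsi_{m}^{\ \pm}(\uL_n)\oPsi_{\bar m}^{\ \pm}(\uL_n)N(n)$ vanishes.

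I expect the main obstacle to lie not in any computation --- that is discharged by \cite{EH-Lim04} --- but in the bookkeeping of the second paragraph: verifying that $\hat{V}_R$ is genuinely diagonal, i.e. that inserting $V_R(e)$ cannot couple $\oL$ to a different geometric hyperlink and so cannot produce off-diagonal terms. This reduces to the fact that the volume functional is built from the vierbein $e$ alone and is read off from nodes of the matter hyperlink, so a geometric hyperlink $\uL_m$ enters the path integral only through the holonomy factor $V(\uL_m)(e)$ already present in $Z^\pm$, leaving the inserted scalar $\lambda_{\oL}^\pm$ independent of $m$.
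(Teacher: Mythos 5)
Your proposal follows the paper's own route: the corollary is obtained exactly as you describe, by invoking the main theorem of \cite{EH-Lim04} to evaluate the $\kappa\to\infty$ limit of Expression \ref{e.v.11} as the Casimir-weighted node count, identifying that count with the confinement number $\nu_R(\ol^u)$ via the invariance results of \cite{EH-Lim06}, and noting that the resulting scalar depends only on $\oL$ so that each $\oPsi_{\oL}^{\ \pm}$ is an eigenfunctional. Your extra bookkeeping (diagonality and the disjoint-support orthogonality argument) is correct and merely makes explicit what the paper leaves implicit in the construction of Definition \ref{d.qls}.
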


Note that $\nu_R(\ol^u)$ is computed from the oriented time-like triple $\{\emptyset, R, \chi(\oL, \uL)\}$ and the eigenvalues of $\hat{V}_R$ are invariant under diffeomorphism of $\bR \times \bR^3$, provided it respects time-like isotopy, preserving time-ordering.

\begin{rem}\label{r.to.2}
Remark \ref{r.tr.2} said that each node $\pi_0(\vec{x})$ has a definite sign from its time-component $\tau(\vec{x})$, but it does not affect the confinement number. Thus, the confinement number and hence the eigenvalues of the volume operator are actually independent of any time-ordering implicitly defined.
\end{rem}

If one looks at the eigenvalues of $\hat{V}_R$, notice that we have terms involving the square of the total momentum $\sqrt{\xi_{\rho_u^\pm}}$, which in classical mechanics, is interpreted as the kinetic energy of the particle represented by the colored hyperlink $\ol^u$, with representation $\rho_u^\pm$. The Casimir operator $\xi_{\rho_u^+}$ ($\xi_{\rho_u^-}$) will represent translational (rotational) kinetic energy. So, one can view the volume operator as measuring the total kinetic energy of the matter hyperlinks, in a region $R$. Volume operator can be interpreted as energy operator is also evident in \cite{Thiemann:2002nj}.

We can use the eigenvalues of the volume operator to resolve some inconsistency in QFT, as raised by Thiemann in \cite{Thiemann:2002nj}. A particle's momentum $p$ is inversely proportional to its Compton length and its energy $E$ is proportional to its Schwarzschild radius. When its Compton length is comparable to its Schwarzschild radius, then GR predicts that this particle will turn into a black hole, implying that Hawking radiation and all sorts of virtual particles with large momentum and energy will henceforth be emitted.

This will happen at short distances, in the order of Planck's distance (around $10^{-33}$ cm). At such short distances, quantum gravity should come into play.  The main theorem in \cite{EH-Lim04} says that the (kinetic) energy of the particle is discretized. This means that QFT and GR are no longer applicable and the above qualitative picture does not apply.

In the derivation of the volume path integral, it was necessary to write the region $R = \bigcup_{m} R_m$ as a disjoint union of smaller regions. The computations in \cite{EH-Lim04} showed that the final answer is independent of this partition. As the volume of $R$ is given by counting the number of nodes in its interior, we may assume that $R_m$ has either zero or one quanta of volume, which is equivalent to $R_m$ containing zero or one node in its interior respectively. Thus, any smaller region $R' \subset R_m$, will have the same quanta of volume, as long as it contains the node, if any. In other words, any local region inside $R_m$ will contain the same quanta of volume.

\subsection{Curvature operator}\label{ss.co}

Choose an orientable, compact surface $S \subset \bR \times \bR^3$, with or without boundary, which is disjoint from the geometric hyperlink $\uL$. If it has no boundary, we term $S$ as closed. Furthermore, we insist the projected hyperlink $\pi_a(\uL)$ intersect $\pi_a(S)$ at most finitely many points, $a = 0, 1, 2, 3$. If $S$ has a boundary $C$, then $C$ tangles with $\uL$ to form a time-like hyperlink. And we will also time-order the geometric loop with the surface $S$ if there is a boundary, according to Definition \ref{d.s.1}.

Define
\begin{align*}
F_S(\omega)
:= \frac{1}{2}\int_S &\frac{\partial A_{\alpha\beta}^i}{\partial x_0} \otimes dx_0 \wedge dx_i \otimes \hat{E}^{\alpha\beta} + \frac{\partial A_{\alpha \beta}^j}{\partial x_i} \otimes dx_i \wedge dx_j \otimes \hat{E}^{\alpha \beta} \\
& + A_{\alpha\beta}^iA_{\gamma\mu}^j\otimes dx_i \wedge dx_j \otimes [\hat{E}^{\alpha\beta}, \hat{E}^{\gamma\mu}].
\end{align*}

In \cite{EH-Lim05}, we quantized the curvature of $S$ into an operator $\hat{F}_S$ using the following path integral expression (indexed by a parameter $\kappa$) \beq \frac{1}{Z}\int_{\omega \in L_\omega,\ e \in L_e}F_S(\omega) \cdot V(\{\ul^v\}_{v=1}^{\un})(e) W(q; \{\ol^u, \rho_u\}_{u=1}^{\on})(\omega)  e^{i S_{{\rm EH}}(e, \omega)}\ De D\omega, \label{e.ehc.1} \eeq whereby $V$ and $W$ were defined in Equation (\ref{e.l.7}) and $Z$ is a normalization constant given by Equation (\ref{e.l.8}).

\begin{rem}\label{r.c.1}
\begin{enumerate}
  \item A quantized curvature operator at each point $p \in \bR^4$ is not defined, as commented in \cite{PhysRevD.73.124038}.
  \item If we regard the curvature as a field, then it was explained in \cite{streater} that the curvature $R$ do not yield a well-defined operator; rather only smeared fields will yield quantized operators. In this case, we smear the curvature over a surface $S$, given by $F_S$.
\end{enumerate}
\end{rem}

The main theorem in \cite{EH-Lim05} says that the limit of this sequence of path integrals in Expression \ref{e.ehc.1} as $\kappa$ goes to infinity, is computed from the linking number between $\uL$ and the surface $S$, denoted as ${\rm lk}(\uL, S)$. See \cite{EH-Lim06}. If we choose the representation $\{\rho_u^\pm\}_{u=1}^{\on}$ for the hyperlink $\oL$, then we will write this limit as $\hat{F}_S[Z^\pm(q; \chi(\oL, \uL))]$, $Z^\pm(q; \chi(\oL, \uL))$ given in Definition \ref{d.z.1}.

Recall for a hyperlink $\uL$, we can define a functional $\uPsi_{\uL}^{\ \pm}$ acting on the dual configuration space. As a consequence, we have the following corollary from the main theorem in \cite{EH-Lim05}.

\begin{cor}\label{c.c.1}
Let $S\subset \bR^4$ be an oriented closed connected surface. The spin curvature integrated over the oriented surface $S$ will give us an operator $\hat{F}_S$, which acts on functionals in $\uV^{\ \pm}$.

Suppose $\uPsi_{\oL}^{\ \pm} \in \uV^{\ \pm}$. Then, \beq \hat{F}_S \uPsi_{\uL}^{\ +} =
-i\sqrt{4\pi}\ {\rm lk}(\uL, S)\uPsi_{\uL}^{\ +} \otimes  \mathcal{E}
\nonumber \eeq and \beq \hat{F}_S \uPsi_{\uL}^{\ -} =  -i\sqrt{4\pi}\ {\rm lk}(\uL, S)\uPsi_{\uL}^{\ -} \otimes  \mathcal{E}
. \nonumber \eeq

If $S$ is a compact oriented connected surface with boundary, then the calculations in \cite{EH-Lim05} and using the definition of linking number given in \cite{EH-Lim06}, we see that \beq \hat{F}_S \uPsi_{\uL}^{\ +} =
-\frac{i\sqrt{4\pi}}{4}\ {\rm lk}(\uL, S)\uPsi_{\uL}^{\ +} \otimes  \mathcal{E}
\nonumber \eeq and \beq \hat{F}_S \uPsi_{\uL}^{\ -} =  -\frac{i\sqrt{4\pi}}{4}\ {\rm lk}(\uL, S)\uPsi_{\uL}^{\ -} \otimes  \mathcal{E}
. \nonumber \eeq

\end{cor}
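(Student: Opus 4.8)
The plan is to obtain this corollary directly from the main theorem of \cite{EH-Lim05}, in complete parallel with the way Corollaries \ref{c.a.1} and \ref{c.vo.1} were read off from \cite{EH-Lim03} and \cite{EH-Lim04}. First I would recall that $\hat{F}_S$ is defined by inserting the smeared curvature $F_S(\omega)$ into the Einstein-Hilbert path integral (Expression \ref{e.ehc.1}) and passing to the limit $\kappa \to \infty$. The cited main theorem guarantees that this limit exists and, when evaluated on the colored hyperlink $\chi(\oL_m, \uL_m)$ carrying the representation $\{\rho_u^\pm\}_{u=1}^{\on}$, is governed by the linking number ${\rm lk}(\uL_m, S)$ between the geometric hyperlink and the surface, multiplied by a universal constant and tensored with the Lie-algebra element $\mathcal{E}$ produced by the $\hat{E}^{\alpha\beta}$-valued integrand. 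Writing this limit as $\hat{F}_S[Z^\pm(q; \chi(\oL_m,\uL_m))]$, the theorem supplies, for closed $S$, the identity $\hat{F}_S[Z^\pm] = -i\sqrt{4\pi}\,{\rm lk}(\uL_m, S)\,Z^\pm \otimes \mathcal{E}$. Because the emergent invariant is ${\rm lk}(\uL, S)$ rather than a matter-hyperlink piercing or confinement number, $\hat{F}_S$ is naturally an operator on the space $\uV^{\ \pm}$ of functionals indexed by geometric hyperlinks.

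Next I would transfer this identity from the scalar $Z^\pm$ to the functional $\uPsi_{\uL}^{\ \pm}$. Since $\uPsi_{\uL_m}^{\ \pm}(\oL_n) = Z^\pm(q; \chi(\oL_m, \uL_m))$ for $n = m$ and $0$ otherwise, the operator $\hat{F}_S$ acts on $\uPsi_{\uL_m}^{\ \pm}$ by replacing its single nonzero value $Z^\pm$ with $\hat{F}_S[Z^\pm]$. As ${\rm lk}(\uL, S)$ is a fixed number and, by \cite{EH-Lim06}, a time-like isotopy invariant preserving time-ordering, this yields $\hat{F}_S \uPsi_{\uL}^{\ \pm} = -i\sqrt{4\pi}\,{\rm lk}(\uL, S)\,\uPsi_{\uL}^{\ \pm} \otimes \mathcal{E}$. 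The same constant appears in the $+$ and $-$ sectors because the linking number, and the full element $\mathcal{E} = (\mathcal{E}^+, \mathcal{E}^-)$, do not distinguish between $\rho^+$ and $\rho^-$; this is why the two displayed formulas coincide up to the superscript.

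For the case where $S$ carries a boundary $C$, I would run the identical argument but invoke the definition of the linking number ${\rm lk}(\uL, S)$ for a surface-with-boundary from \cite{EH-Lim06}, together with the corresponding computation in \cite{EH-Lim05}; here I would also insist, via Definition \ref{d.s.1}, that $C$ be time-ordered with $\uL$ so that this linking number is the relevant invariant. The only change is in the universal normalization: the boundary contribution rescales the prefactor from $-i\sqrt{4\pi}$ to $-i\sqrt{4\pi}/4$, producing the stated formulas.

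The delicate point, and the step I would treat with the most care, is the interpretation of the $\otimes \mathcal{E}$ factor. In contrast to the area and volume operators, whose eigenvalues are honest scalars, the curvature is $\mathfrak{su}(2)\times\mathfrak{su}(2)$-valued, so $\hat{F}_S$ lands in $\uV^{\ \pm}\otimes(\mathfrak{su}(2)\times\mathfrak{su}(2))$ and the ``eigenvalue'' is the Lie-algebra element $-i\sqrt{4\pi}\,{\rm lk}(\uL,S)\,\mathcal{E}$ rather than a number. I would therefore make precise the sense in which $\uPsi_{\uL}^{\ \pm}\otimes\mathcal{E}$ is the image and verify that the $\kappa\to\infty$ limit commutes with the extraction of this tensor factor. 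This is exactly where the analytic content of \cite{EH-Lim05} is used; once it is granted, the remaining steps are bookkeeping.
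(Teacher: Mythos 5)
Your proposal matches the paper's own treatment: the corollary is read off directly from the main theorem of \cite{EH-Lim05} (which identifies the $\kappa\to\infty$ limit of Expression \ref{e.ehc.1} with a multiple of ${\rm lk}(\uL,S)$ tensored with $\mathcal{E}$), transferred to the functionals $\uPsi_{\uL}^{\ \pm}$ on the dual configuration space, with the boundary case handled by the linking-number definition of \cite{EH-Lim06} and the rescaled prefactor $-i\sqrt{4\pi}/4$, and with the same caveat that the ``eigenvalue'' is $\mathfrak{su}(2)\times\mathfrak{su}(2)$-valued so that $\hat{F}_S\uPsi_{\uL}^{\ \pm}$ is $\mathcal{E}$-valued rather than scalar. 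This is essentially the same argument the paper gives, so no further comparison is needed.
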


The quantized curvature operator is actually $\mathfrak{su}(2) \times \mathfrak{su}(2)$-valued. Since $\uPsi_{\uL}^{\ \pm}$ is a scalar-valued linear functional on the dual configuration space, we see that $\hat{F}_S \uPsi_{\uL}^{\ \pm}$ is actually $\mathcal{E}$-valued. Note that $\mathcal{E}^\pm$ depends on our earlier choice of $\{\breve{e}_i\}_{i=1}^3$ and $\{\hat{e}_i\}_{i=1}^3$ respectively, so it is not unique. However, its eigenvalues, given by $\pm i\sqrt3/2$ are well-defined, independent of any choice of $\{\breve{e}_i\}_{i=1}^3$ or $\{\hat{e}_i\}_{i=1}^3$.

Thus, the eigenvalues of $\hat{F}_S \uPsi_{\uL}^{\ \pm}$ will be \beq \frac{\sqrt{12\pi}}{2}\ {\rm lk}(\uL, S)\uPsi_{\uL}^{\ \pm}\quad {\rm and}\quad -\frac{\sqrt{12\pi}}{2}\ {\rm lk}(\uL, S)\uPsi_{\uL}^{\ \pm}, \label{e.f.1} \eeq each with algebraic multiplicity 2, when $S$ has no boundary. When $S$ has a boundary, we replace the denominator 2 with an 8 instead.

Each such pair of eigenvalues correspond to the eigenvector $\uPsi_{\uL}^{\ \pm}$, so we will say that the canonical basis $\{\uPsi_{n}^{\ \pm}\}_{n=1}^\infty$ is an eigenbasis for the operator $\hat{F}_S$, with corresponding eigenvalues given by Equation (\ref{e.f.1}). Furthermore, the basis is orthogonal using the counting measure $N$.

Note that the linking number is computed from the oriented time-like triple $\{S, \emptyset, \chi(\oL,\uL)\}$ and the quantum Hilbert space $\{\hat{F}_S, \mathcal{H}(\uV^{\ \pm}) \}$, so it is invariant under the equivalence relation. When $S$ has a boundary, then we need time-like isotopy and time-ordering to ensure that the linking number is proportional to the hyperlinking number between $\partial S$ and $\uL$, so it is invariant under the equivalence relation. In both cases, the linking number is not computed locally in space, unlike area and volume observables.

\begin{rem}\label{r.to.3}
\begin{enumerate}
  \item Unlike the piercing and confinement numbers, the linking number between a geometric hyperlink and a compact surface $S$ with boundary, is actually dependent on any time-ordering between the surface and component geometric loops, when imposed.
  \item When the surface is closed, no time-ordering between the surface and the loop needs to be imposed. But, as explained in Section \ref{s.obs}, a time-ordering is actually implicitly defined when the surface is connected, dependent on the orientation of the loop, which will affect the sign of the linking number accordingly.
\end{enumerate}
\end{rem}


%



Quantized curvature has an important application to cosmology, as explained in \cite{Thiemann:2002nj}. In the Friedmann-Robertson-Walker (FRW) universe model, it is predicted that as time goes down to 0, the universe will be so warped up that curvature is infinity and the metric will be singular. However, in such a regime, volume would be so small that quantum gravity has to take over as the main physics theory. Furthermore, LQG is the correct theory for gravity when metric becomes singular. See Remark \ref{r.tr.4} and also \cite{PhysRevD.77.024046}. The curvature predicted by LQG would always be finite, so the FRW model no longer holds in quantum gravity regime. We would also like to highlight the work done by the authors in \cite{PhysRevLett.96.141301}, who proposed replacing the big bang, with a `quantum bounce', which will also resolve the singularity predicted by the FRW model.

\section{Hamiltonian constraint and Stress operators}\label{s.hmo}

If one looks at Einstein's field equations in GR, one side of the equation is geometric in nature; the other side is not. Except for trivial cases, it is almost impossible to write down the stress-energy tensor. This indeed taints the beauty of the theory of GR. See \cite{feynman2002feynman}.

In \cite{Thiemann:2002nj}, Thiemann also explained that Einstein's equation is flawed in the sense that one side of the equation is classical theory, whereas the other side of the equation is quantum theory. The variable to be solved in Einstein's equation is the metric $g$. He proposed to convert Einstein's equation into an operator-valued equation and solve for a quantized metric. His proposed quantum Einstein's equation can be found in \cite{Thiemann:2007zz}.

In our humble opinion, this is not correct. Rather, we will use Einstein's equation to define a stress operator, using the area and curvature operators defined in \cite{EH-Lim03} and \cite{EH-Lim05} respectively. Also see Section \ref{s.qo}.

We know that Ricci curvature $\Ric$ and scalar curvature ${\rm Scal}$ are defined as \beq \Ric_{\alpha}^\beta = R_{\alpha \gamma}^{\gamma\beta}\quad {\rm and}\quad {\rm Scal} := {\rm Ric}_{\alpha }^\alpha \equiv R_{\alpha \gamma}^{\gamma\alpha}, \nonumber \eeq
whereby $R_{\alpha\beta}^{\gamma\delta}$ are the components of the Riemann curvature tensor.

Einstein's equations are given by ($\Ric_{\mu\gamma} \equiv g_{\mu\beta}\Ric_{\gamma}^{\beta}$) \beq \Ric_{\mu\gamma} - \frac{1}{2}{\rm Scal}\ g_{\mu \gamma} = \kappa T_{\mu \gamma}, \ \kappa = 8\pi G,\label{e.einstein} \eeq whereby $T$ is the stress-energy tensor and $G$ is Newton's gravitational constant. We are going to quantize $T$ into an operator $\hat{T}_{S}^\pm$, which depends on a surface $S$, using the Einstein's equation.

Pick a closed surface $S \subset \{0\} \times \bR^3$. From our quantization of the curvature tensor, we see that we have to integrate $R$ over a given surface. And from Remark \ref{r.c.1}, we see that it is not possible to define a quantized Ricci or Scalar curvature using a surface $S$, as the surface integral would require specifying a metric.

So we replace Ricci curvature with the operator $\hat{F}_{S}$ as defined in Corollary \ref{c.c.1}. But notice that $\hat{F}_{S}$ is actually a multiple of $\mathcal{E} \equiv (\mathcal{E}^+, -\mathcal{E}^-)$.

\begin{notation}
Write \beq \hat{F}_{S}^+ \uPsi_{\uL}^{\ +} = -i\sqrt{4\pi}\ {\rm lk}(\uL, S)\uPsi_{\uL}^{\ +} \otimes  \mathcal{E}^+, \quad \hat{F}_{S}^- \uPsi_{\uL}^{\ -} = i\sqrt{4\pi}\ {\rm lk}(\uL, S)\uPsi_{\uL}^{\ -} \otimes  \mathcal{E}^-. \nonumber \eeq Thus, $\hat{F}_{S} =
\left(
  \begin{array}{cc}
    \hat{F}_{S}^+ &\ 0 \\
    0 &\ \hat{F}_{S}^- \\
  \end{array}
\right)$, which acts on $
\left(
  \begin{array}{c}
    \uPsi_{\uL}^{\ +} \\
    \uPsi_{\uL}^{\ -} \\
  \end{array}
\right)$.
\end{notation}

As for the metric $g$, we replace it with the Area operator $\hat{A}_{S}$ defined in Corollary \ref{c.a.1}. But the Area operator has the real and imaginary eigenvalues, which we can view them as components in a 2-vector.

\begin{notation}
Write \beq
\hat{A}_{S}^+ \oPsi_{\oL}^{\ +} := \frac{|q|\sqrt\pi}{2} \left[
\sum_{u=1}^{\on}\nu_{S}(\ol^u)  \sqrt{\xi_{\rho_u^+}}\right]\oPsi_{\oL}^{\ +}, \quad \hat{A}_S^- \oPsi_{\oL}^{\ -} := \frac{|q|\sqrt\pi}{2} \left[
\sum_{u=1}^{\on}\nu_{S}(\ol^u)\sqrt{\xi_{\rho_u^-}}\right]\oPsi_{\oL}^{\ -}. \nonumber \eeq By abuse of notation, we will now write $\hat{A}_S$ in matrix form, \beq \hat{A}_S =
\left(
  \begin{array}{cc}
    \hat{A}_{S}^+ &\ 0 \\
    0 &\ \hat{A}_{S}^- \\
  \end{array}
\right),\ {\rm which\ acts\ on}\
\left(
  \begin{array}{c}
    \oPsi_{\oL}^{\ +} \\
    \oPsi_{\oL}^{\ -} \\
  \end{array}
\right). \nonumber \eeq
\end{notation}

\begin{defn}(Stress Operator)\label{d.mo}\\
Choose a closed surface $S \subset \{0\} \times \bR^3$. Recall we defined phase spaces \beq V^+ \subset \oV^{\ +} \otimes \uV^{\ +}, \quad V^- \subset \oV^{\ -} \otimes \uV^{\ -}. \nonumber \eeq Define a stress operator $\hat{T}_S$, which acts on the direct product $V^+ \times V^-$,
\begin{align*}
\hat{T}_S &\equiv
\left(
  \begin{array}{cc}
    \hat{T}_{S}^+ &\ 0 \\
    0 &\ \hat{T}_{S}^- \\
  \end{array}
\right)
:= \frac{1}{\kappa}\left( \hat{F}_S - \frac{1}{2}\hat{A}_S\right).
\end{align*}

Suppose $\oL = \{\ol^u\}_{u=1}^{\on}$, each matter loop component colored with representation $\rho_u^+$ and $\rho_u^-$. Then on $V^+ \subset \oV^{\ +} \otimes \uV^{\ +}$, we have
\begin{align*}
\hat{T}_{S}^+\left(\oPsi_{\oL}^{\ +} \otimes \uPsi_{\uL}^{\ +}\right) =& \frac{1}{\kappa}\left(\oPsi_{\oL}^{\ +} \otimes \hat{F}_S^+\uPsi_{\uL}^{\ +} - \frac{1}{2}\hat{A}_{S}^+\oPsi_{\oL}^{\ +}\otimes \uPsi_{\uL}^{\ +}\right),\\
\equiv& \frac{1}{\kappa}\left( -i\sqrt{4\pi}\ {\rm lk}(\uL, S)\mathcal{E}^+ - \frac{|q|\sqrt\pi}{4} \left[
\sum_{u=1}^{\on}\nu_S(\ol^u)  \sqrt{\xi_{\rho_u^+}}\right] \right)\oPsi_{\oL}^{\ +} \otimes \uPsi_{\uL}^{\ +},
\end{align*}
whereby $\kappa = 8\pi G$.

Similarly, on $V^- \subset \oV^{\ -} \otimes \uV^{\ -}$, we have
\begin{align*}
\hat{T}_{S}^-\left(\oPsi_{\oL}^{\ -} \otimes \uPsi_{\uL}^{\ -}\right) =& \frac{1}{\kappa}\left(\oPsi_{\oL}^{\ -} \otimes \hat{F}_S^-\uPsi_{\uL}^{\ -} - \frac{1}{2}\hat{A}_{S}^-\oPsi_{\oL}^{\ -}\otimes \uPsi_{\uL}^{\ -}\right),\\
\equiv& \frac{1}{\kappa}\left( i\sqrt{4\pi}\ {\rm lk}(\uL, S)\mathcal{E}^- - \frac{|q|\sqrt\pi}{4} \left[
\sum_{u=1}^{\on}\nu_S(\ol^u)  \sqrt{\xi_{\rho_u^-}}\right] \right)\oPsi_{\oL}^{\ -} \otimes \uPsi_{\uL}^{\ -}.
\end{align*}

Note that $V^\pm$ is a vector space containing functionals acting on the phase space $\{(\uL_{m}, \oL_m)\}_{m=1}^\infty $. And we are using a time-like triple $\{S, \emptyset, \chi(\oL,\uL)\}$ to compute the linking and piercing numbers of $S$ with $\uL$ and $\oL$ respectively.
\end{defn}

\begin{rem}
We can state a similar definition for a compact connected surface with boundary. Then
\beq \hat{T}_{S}^+\left(\oPsi_{\oL}^{\ +} \otimes \uPsi_{\uL}^{\ +}\right) = \frac{1}{4\kappa}\left( -i\sqrt{4\pi}\ {\rm lk}(\uL, S)\mathcal{E}^+ - |q|\sqrt\pi \left[
\sum_{u=1}^{\on}\nu_S(\ol^u)  \sqrt{\xi_{\rho_u^+}}\right] \right)\oPsi_{\oL}^{\ +} \otimes \uPsi_{\uL}^{\ +}, \nonumber \eeq and \beq \hat{T}_{S}^-\left(\oPsi_{\oL}^{\ -} \otimes \uPsi_{\uL}^{\ -}\right) = \frac{1}{4\kappa}\left( i\sqrt{4\pi}\ {\rm lk}(\uL, S)\mathcal{E}^- - |q|\sqrt\pi \left[
\sum_{u=1}^{\on}\nu_S(\ol^u)  \sqrt{\xi_{\rho_u^-}}\right] \right)\oPsi_{\oL}^{\ -} \otimes \uPsi_{\uL}^{\ -}. \nonumber \eeq
\end{rem}

Note that $\hat{T}_{S}^\pm\left(\oPsi_{\oL}^{\ \pm} \otimes \uPsi_{\uL}^{\ \pm}\right)$ is a matrix-valued linear functional on the phase space. Now $\mathcal{E}^\pm$ has purely imaginary eigenvalues, hence the operator $\hat{T}_{S}^\pm\left(\oPsi_{\oL}^{\ \pm} \otimes \uPsi_{\uL}^{\ \pm}\right)$ has real eigenvalues. For $(\underline{\mathcal{L}},\overline{\mathcal{L}})$ in the phase space, the two real eigenvalues for $\hat{T}_{S}^+\left(\oPsi_{\oL}^{\ +}(\underline{\mathcal{L}}) \otimes \uPsi_{\uL}^{\ +}(\overline{\mathcal{L}})\right)$ are given by \beq
\frac{1}{\kappa}\left(  - \frac{|q|\sqrt\pi}{4} \left[
\sum_{u=1}^{\on}\nu_S(\ol^u)  \sqrt{\xi_{\rho_u^+}}\right]\pm \frac{\sqrt{12\pi}}{2}\ {\rm lk}(\uL, S) \right)\oPsi_{\oL}^{\ +}(\underline{\mathcal{L}}) \otimes \uPsi_{\uL}^{\ +}(\overline{\mathcal{L}}), \nonumber \eeq
and for $\hat{T}_{S}^-\left(\oPsi_{\oL}^{\ -}(\underline{\mathcal{L}}) \otimes \uPsi_{\uL}^{\ -}(\overline{\mathcal{L}})\right)$, given by \beq
\frac{1}{\kappa}\left(  -\frac{|q|\sqrt\pi}{4} \left[
\sum_{u=1}^{\on}\nu_S(\ol^u)  \sqrt{\xi_{\rho_u^-}}\right]\mp \frac{\sqrt{12\pi}}{2}\ {\rm lk}(\uL, S) \right)\oPsi_{\oL}^{\ -}(\underline{\mathcal{L}}) \otimes \uPsi_{\uL}^{\ -}(\overline{\mathcal{L}}). \nonumber \eeq

These eigenvalues depend on how the matter and geometric hyperlinks are linked with the closed surface $S$. Furthermore, we can see quantum fluctuations of the area eigenvalues, due to the geometric hyperlink. We can write a similar expression when $S$ has a boundary.

From the discussion, we will say that the basis vectors $\{ \oPsi_{n}^{\ \pm} \otimes \uPsi_{n}^{\ \pm}\}_{n=1}^\infty $ for $V^\pm$ form an eigenbasis for the operators $\hat{T}_{S}^\pm$. Furthermore, this basis is orthogonal using the counting measure $N$.

The eigenvalues from the area operator were interpreted as total momentum of the matter hyperlink, coming from boost and rotation separately. Curvature of a field is usually interpreted as the field strength and since we integrate the curvature over a surface $S$ and quantize it via a path integral, the eigenvalues of $\hat{F}_S$ should be interpreted as the quantized flux over the surface $S$. See \cite{peskin1995introduction}. Hence, the eigenvalues of the stress operator measure the total quanta of matter and geometric hyperlinks passing through the surface $S$.

\begin{rem}
From Remark \ref{r.to.3}, the linking number is dependent on the imposed time-ordering, when the surface has a boundary. Even when $S$ is connected and has no boundary, the linking number is dependent on the implicitly defined time-ordering, as explained in Section \ref{s.obs}. Hence, the stress operator is actually dependent on time-ordering.
\end{rem}



Observe that the stress operators do not contain any information of energy. So where is the Hamiltonian? Recall we explained that the volume operator gives us the kinetic energy of matter hyperlinks. But in the construction of the stress operator, we see that the two stress operators were defined to be acting on the tensor product space $V^\pm \subset \oV^{\ \pm} \otimes \uV^{\ \pm}$. The volume operator acts only on $\oV^{\ \pm}$, so we must have a second operator that acts on $\uV^{\ \pm}$ and it must describe energy. The missing operator is the potential operator $\hat{U}$, which we will now define.

Modanese in \cite{MODANESE1995697} talked about potential energy in quantum gravity. Borrowing ideas from Quantum Chromodynamics (See \cite{Nair}.), he explained how to compute the energy eigenvalues from a path integral expression. In a background independent environment, time and length do not make sense and the potential energy should refer to geometrical invariant quantities. This motivates the following definition.

\begin{defn}(Potential operator)\label{d.po}\\
Recall in our construction of the phase space, for each matter hyperlink $\oL_m = \{\ol_m^1, \ldots, \ol_m^{\on(m)}\}$, we chose how it is tangled with a corresponding geometric hyperlink $\uL_m$. We denoted this hyperlink as $\chi(\oL_m, \uL_m)$. Now for each component matter loop from $\oL_m$, we will write $\chi(\ol_m^u, \uL_m)$ to denote the hyperlink inside $\chi(\oL_m, \uL_m)$, formed using $\ol_m^u$ and $\uL_m$. Note that $\ol_m^u$ is colored with a representation $\rho_{m,u}$.

We define the potential operator, denoted by $\hat{U}$, as \beq \left[\hat{U}\uPsi_{m}^{\ \pm}\right](\oL_n) :=
\left\{
  \begin{array}{ll}
    \left[\sum_{u=1}^{\overline{n}(m)}\log \ Z^\pm(q; \chi(\{\ol_m^u, \rho_{m,u}\} , \uL_m)) \right]\uPsi_{m}^{\ \pm}(\oL_m), & \hbox{$n=m$;} \\
    0, & \hbox{$n \neq m$.}
  \end{array}
\right. \nonumber \eeq
\end{defn}

\begin{rem}
\begin{enumerate}
  \item So far, all the observables defined in Section \ref{s.qo} were related to a geometric object independent from the hyperlink. For the potential operator however, we are using the geometric hyperlink to define it.
  \item The eigenvalues of $\hat{U}$ are non-negative.
  \item The hyperlinking number is an equivalence invariant, if we impose time-ordering between matter and geometric loops. Hence the eigenvalues of the potential operator are well-defined for this equivalence class. Remark \ref{r.tr.1} says that $Z^\pm$ is actually dependent of this time-ordering, thus the potential operator $\hat{U}$ is actually independent of time-ordering.
\end{enumerate}

\end{rem}

Refer to Section \ref{s.su.1}. From Equation (\ref{e.tr.1}) and Definition \ref{d.z.1}, we leave to the reader to check that indeed, \beq \log \ Z^\pm(q; \chi(\{\ol^u, \rho_{u}\} , \uL)) =
\left\{
  \begin{array}{ll}
    \log\left[\sum_{v=1}^{(2j_{\rho_{u}^\pm}+1)/2}2\cosh\left(\pi q\ {\rm sk}(\ol^u, \uL)\lambda_v^\pm \right) \right], & \hbox{$2j_{\rho_{u}^\pm} + 1$ is even;} \\
    \ \log\left[1 + \sum_{v=1}^{j_{\rho_{u}^\pm}}2\cosh\left(\pi q\ {\rm sk}(\ol^u, \uL)\lambda_v^\pm \right) \right], & \hbox{$2j_{\rho_{u}^\pm} + 1$ is odd.}
  \end{array}
\right. \label{e.l.3} \eeq Here, $\{\pm\lambda_v^\pm \}_{v=1}^{(2j_{\rho^\pm}+1)/2}$ and $\{\pm\lambda_v^\pm \}_{v=1}^{j_{\rho^\pm}}$ are the set of non-zero real eigenvalues of $\rho^\pm(i\mathcal{E}^\pm)$, $\mathcal{E}^\pm$ as defined in Section \ref{s.su.1}, corresponding to $2j_{\rho}^\pm + 1$ being even and odd respectively.

Hence, we see that $\{\uPsi_{m}^{\ \pm}\}_{m=1}^\infty$ forms an eigenbasis for $\hat{U}$, the corresponding eigenvalues given by Equation (\ref{e.l.3}). Note that the eigenvalues are all greater than or equal to 0.

So what does the potential energy operator measure? If one assumes $q$ to be small and we do a Taylor's series expansion, then we have \beq \sum_{u=1}^{\overline{n}}\log \ Z^\pm(q; \chi(\{\ol^u, \rho_{u}\} , \uL)) =
\sum_{u=1}^{\overline{n}}\log(2j_{\rho_{u}^\pm}+1) + q^2\sum_{u=1}^{\overline{n}}\frac{ \pi^2}{(2j_{\rho_{u}^\pm}+1)} \sum_{v \geq 1}\left( {\rm sk}(\ol^u, \uL)\lambda_v^\pm \right)^2 + \cdots, \nonumber \eeq in both cases.

If one compares the above series expansion with the Expression 28 in \cite{MODANESE1995697}, then one sees that the charge $q$ and its coefficient should be interpreted as the mass of a particle and interaction energy respectively. The coefficient is computed using the geometrical invariant hyperlinking number and we can interpret physically as the matter hyperlink $\{l^u\}_{u=1}^{\overline{n}}$ is held together by a geometric hyperlink $\uL$. Thus, gravitational attraction can now be visualized as matter loops being held together by a geometric hyperlink.

We are now ready to define the Hamiltonian of the phase space, which measures pockets of energy in compact regions in space.

\begin{defn}(Hamiltonian constraint operator)\label{d.ham}\\
Choose a compact solid region $R \subset \bR^3$. We define the Hamiltonian constraint operator, $\hat{H}_{R}$ as \beq \hat{H}_{R}  = \hat{V}_R \otimes \underline{1} + \overline{1} \otimes \hat{U}, \nonumber \eeq where $\underline{1}$ and $\overline{1}$ are the identities on $\uV^{\ \pm}$ and $\oV^{\ \pm}$ respectively. This operator clearly acts on $V^\pm$, which is a vector space of functionals acting on the phase space $\{(\uL_{m}, \oL_m)\}_{m=1}^\infty $.
\end{defn}

\begin{rem}\label{r.tr.5}
From Remark \ref{r.to.2}, note that the eigenvalues of the volume operator $\hat{V}_R$ for a solid compact region $R \subset \bR^3$, is independent of any time-ordering.  Because of Remark \ref{r.tr.1}, we see that the operator $\hat{U}$ is dependent on how we time-order the matter and geometric loops.
\end{rem}

Note that $\hat{V}_R$ acts on $\oPsi_{m}^{\ \pm}$, which depends on matter hyperlinks; whereas $\hat{U}$ acts on $\uPsi_{m}^{\ \pm}$, which depends on geometric hyperlinks. Furthermore, we are using the time-like triple $\{\emptyset, R, \chi(\oL, \uL)\}$ to compute the eigenvalues. Hence we have the following nice result.

\begin{thm}
For a compact solid region $R \subset \bR^3$, the basis vectors $\{ \oPsi_{m}^{\ \pm} \otimes \uPsi_{m}^{\ \pm}\}_{m=1}^\infty $ in vector space $V^\pm $ forms an eigenbasis for the Hamiltonian constraint operator $\hat{H}_{R}$. Furthermore, it is an orthogonal basis using the counting measure $N$.
\end{thm}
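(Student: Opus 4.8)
The plan is to exploit the product structure of $\hat{H}_R = \hat{V}_R \otimes \underline{1} + \overline{1} \otimes \hat{U}$, in which each summand acts nontrivially on only one tensor factor. By Corollary \ref{c.vo.1} the canonical basis $\{\oPsi_m^{\ \pm}\}_{m=1}^\infty$ of $\oV^{\ \pm}$ is already an eigenbasis for the volume operator $\hat{V}_R$, and by the computation following Definition \ref{d.po} (culminating in Equation (\ref{e.l.3})) the canonical basis $\{\uPsi_m^{\ \pm}\}_{m=1}^\infty$ of $\uV^{\ \pm}$ is an eigenbasis for the potential operator $\hat{U}$. So each factor of the tensor-product vector $\oPsi_m^{\ \pm} \otimes \uPsi_m^{\ \pm}$ is already an eigenvector of the operator restricted to its own factor, and the work reduces to assembling these facts.

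First I would apply $\hat{H}_R$ directly to a basis vector. Since $(\hat{V}_R \otimes \underline{1})(\oPsi_m^{\ \pm} \otimes \uPsi_m^{\ \pm}) = (\hat{V}_R \oPsi_m^{\ \pm}) \otimes \uPsi_m^{\ \pm}$ and $(\overline{1} \otimes \hat{U})(\oPsi_m^{\ \pm} \otimes \uPsi_m^{\ \pm}) = \oPsi_m^{\ \pm} \otimes (\hat{U}\uPsi_m^{\ \pm})$, substituting the eigenvalue relations from Corollary \ref{c.vo.1} and Equation (\ref{e.l.3}) gives
\beq
\hat{H}_R\left(\oPsi_m^{\ \pm} \otimes \uPsi_m^{\ \pm}\right) = \left( \frac{q^2\pi^{3/2}}{2}\sum_{u=1}^{\on(m)} \nu_R(\ol_m^u)\,\xi_{\rho_{m,u}^\pm} + \sum_{u=1}^{\on(m)}\log Z^\pm(q; \chi(\{\ol_m^u, \rho_{m,u}\}, \uL_m)) \right)\oPsi_m^{\ \pm} \otimes \uPsi_m^{\ \pm}. \nonumber
\eeq
This exhibits $\oPsi_m^{\ \pm} \otimes \uPsi_m^{\ \pm}$ as an eigenvector with eigenvalue the sum of the volume and potential eigenvalues; since these vectors span $V^\pm$ by construction, they form an eigenbasis.

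For orthogonality I would invoke the inner product on $\mathcal{H}(V^\pm)$ from Definition \ref{d.qls}, namely the sum $\sum_{n=1}^\infty \oPsi_m^{\ \pm}(\uL_n)\oPsi_{\bar m}^{\ \pm}(\uL_n)\uPsi_m^{\ \pm}(\oL_n)\uPsi_{\bar m}^{\ \pm}(\oL_n)N(n)$. The decisive observation is the delta-type support of the functionals: $\oPsi_m^{\ \pm}(\uL_n)$ vanishes unless $n=m$, and likewise $\oPsi_{\bar m}^{\ \pm}(\uL_n)$ vanishes unless $n=\bar m$. Hence the product in each summand is nonzero only when $n=m=\bar m$, so for $m \neq \bar m$ every term vanishes and the two basis vectors are orthogonal; this is exactly the same mechanism that yields orthogonality of the bases in Corollaries \ref{c.a.1}, \ref{c.vo.1} and \ref{c.c.1}.

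The only genuinely non-routine point, rather than an obstacle, is bookkeeping the domain: $\hat{V}_R$ and $\hat{U}$ may be unbounded, so one should note that the identity asserted above holds on the dense span of the eigenbasis, which is precisely where the operator is defined, and that the tensor-product operator inherits the diagonal (hence essentially self-adjoint on this orthogonal eigenbasis) structure of its summands. With that remark in place the statement follows immediately from the two eigenvalue computations and the support property of the functionals.
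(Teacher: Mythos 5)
Your proposal is correct and follows essentially the same route as the paper, which presents the theorem as an immediate consequence of the preceding facts: $\hat{V}_R$ acts diagonally on $\{\oPsi_m^{\ \pm}\}$ (Corollary \ref{c.vo.1}), $\hat{U}$ acts diagonally on $\{\uPsi_m^{\ \pm}\}$ (Equation (\ref{e.l.3})), so $\hat{H}_R = \hat{V}_R \otimes \underline{1} + \overline{1} \otimes \hat{U}$ is diagonal on $\{\oPsi_m^{\ \pm} \otimes \uPsi_m^{\ \pm}\}$ with eigenvalue the sum of the two, while orthogonality is built into the inner product of Definition \ref{d.qls} via the delta-type support of the functionals. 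Your added remark on the dense domain for the possibly unbounded summands is a harmless refinement, not a departure.
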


Thiemann expressed the Hamiltonian constraint as $H(N)$, whereby $N$ is some test function defined on spatial $\bR^3$ and $H(N)$ is a volume integral of $N$ in $\bR^3$, independent of time. See Equation (4.26) in \cite{Thiemann:2006cf}.

By writing the `regularization version' of the said volume integral as a sequence of Riemann sums, Thiemann `promoted' the integrands using rules of canonical quantization, to be operators, hence defining a sequence of operators. To complete the construction, one has to show that it converges to some operator, denoted as $\hat{H}(N)$.

\begin{rem}
Obviously, there is ambiguity as to how one order the terms in the integrands, and this is a problem that one has to face, when using canonical quantization.
\end{rem}

We did not use any test function $N$ to define the Hamiltonian constraint operator. This is because in our construction of observables given in Section \ref{s.obs}, it does not make sense to define a test function, compactly supported in regions $R \subset \bR^3$, which should be thought of as an equivalence class in $\bR^3$, up to spatial diffeomorphism. In \cite{Thiemann:2006cf}, it was also acknowledged that $H(N)$ cannot be spatially diffeomorphism invariant. Thiemann's Hamiltonian constraint operator was also criticized for being `too local' in \cite{Smolin:1996fz}.

In contrast, we have chosen the Hamiltonian constraint operator to be dependent on the choice of the compact solid region $R \subset \bR^3$, up to diffeomorphism equivalence. But the potential operator which we have added into Definition \ref{d.ham}, is actually independent of it. As computation of the hyperlinking number of a hyperlink requires the global topology of the hyperlink, the potential operator cannot be considered a `local' operator. Later in subsection \ref{ss.lpe}, we will explain why the volume operator should be considered as a `local' operator. Thus, the Hamiltonian constraint operator in Definition \ref{d.ham} is a non-local operator.

From Corollary \ref{c.vo.1}, we see that the eigenvalues of $\hat{V}_R$ are always non-negative. We also said that the eigenvalues of $\hat{U}$ are always non-negative. Hence, the Hamiltonian constraint operator defined in Definition \ref{d.ham} will have non-negative eigenvalues. The eigenstate with zero eigenvalue will correspond to the ground state. Geometrically, we can represent the ground state as an empty set, which means there is no matter and geometric hyperlink. If we choose $R = \emptyset$, then for a non-trivial quantum state, we see that the Hamiltonian constraint operator will yield a strictly positive eigenvalue, which is the eigenvalue of the potential operator. Hence we interpret this as there will always be positive energy in the background of space-time.

The potential operator depends on the hyperlinking number between the matter loop and geometric hyperlink. Together with the volume operator, which is defined using a compact region in spatial $\bR^3$, we see that the Hamiltonian constraint operator we have defined earlier is dependent on any time-ordering, as discussed above. This is similar to the quantized curvature operator and the stress operator, both are also dependent on time-ordering. See Remark \ref{r.to.3}.

\begin{rem}\label{r.a3}
Suppose for each $u=1, \ldots, \on$, we have the time-ordering $\ol^u < \uL$ or $\ol^u > \uL$. Thus, we will have \beq
\sum_{v \geq 1}\cosh\left( \pi q\ {\rm sk}(\ol^u, \uL)\lambda_v^\pm \right) = \sum_{v \geq 1}\cosh\left(3\pi q\ {\rm lk}(\pi_0(\ol^u), \pi_0(\uL))\lambda_v^\pm \right), \nonumber \eeq from Remark \ref{r.tr.1}. In such a scenerio, we see that the potential operator and the Hamiltonian constraint operator, will be independent of time-ordering.

In Section \ref{s.fr}, we will discuss how Einstein's equations might allow us to consider such a scenerio.
\end{rem}

\section{Entropy of a black hole}\label{s.bh}

It was shown by Bekenstein and Hawking that the entropy of a black hole should be finite, which was proved within the framework of QFT on curved space-times and should therefore be valid in a semi-classical regime in which quantum fluctuations of the gravitational field are negligible, for example in a large black hole. As remarked by Thiemann in \cite{Thiemann:2002nj}, any successful quantum theory of gravity must obtain this same result.

It was shown in \cite{PhysRevD.9.3292, PhysRevD.7.2333} by Bekenstein that a black hole with area $A$ in space-time should have a Bekenstein entropy \beq S_{BH} = \frac{k}{4}\frac{c^3 A}{G\hbar }, \label{e.ent.1} \eeq 
for a two-dimensional event horizon of the black hole. Here, $G$ is Newton Gravitational's constant, $\hbar$ is Planck's constant, $k$ is Boltzmann's constant and $c$ is speed of light, all set to 1. See also \cite{wald1984general}. Using our construction of the quantum Hilbert space, we are now going to show that entropy is indeed proportional to area.

Let $\Lambda = \frac{\sqrt \pi}{4}\frac{G\hbar}{c^3}$. Krasnov in \cite{krasnov1996statistical} asked, given $A_0 > 0$, how many quantum states are there such that the eigenvalues of the area operator $\hat{A}_H$ lie within the interval $J := [A_0 - \bar{\delta}\Lambda, A_0+ \bar{\delta}\Lambda]$? Here, $A_0 >> \Lambda$ and $\bar{\delta}$ is some small constant, to be determined later. Mathematically, we are asking for the direct sum of the eigenspaces whose eigenvalues lie in the closed interval.

To answer this question, we will make use of the area operator $\hat{A}_S$ defined in \cite{EH-Lim03}, of which we stated its eigenvalues in Corollary \ref{c.a.1}. But note that each eigenstate will give us an eigenvalue $\rho^+$ or $\rho^-$, from eigenstates $\oPsi_{n}^{\ +}$ and $\oPsi_{m}^{\ -}$ respectively. Thus, we will reformulate the above question and ask how many states are there such that \beq \rho^\pm \in  \left[A_0^\pm - \bar{\delta}\Lambda, A_0^\pm + \bar{\delta}\Lambda \right] . \nonumber \eeq We will specify $\bar{\delta}$ shortly.

The eigenvalues are obtained by counting how many times a projected matter hyperlink pierce a surface $S \subset \bR^3$. However, we can maintain how a projected hyperlink pierce the surface $S$, but change its topology far away from the surface. Furthermore, we can add in more trivial knots which does not pierce the surface $S$, while the piercing number remains unchanged. In other words, there will be an infinite number of such states. Therefore, to argue that there should only be a finite number of states corresponding to eigenvalues in a compact interval, we need to make some reasonable assumptions on the event horizon $H$ of a black hole.

In \cite{PhysRevLett.77.3288}, some assumptions were made, of which only 2 will be useful for our discussion.
\begin{description}
  \item[A1]\label{r.a1}\noindent First assumption is that only the configurations of the hole itself, and not the configurations of the surrounding geometry, affects the hole entropy. This means that it suffices to consider only trivial matter hyperlinks, since how a matter hyperlink is tangled will not affect the entropy. It also means that all matter hyperlinks which does not pierce the hole when projected, will and should not be considered as a quantum state.
  \item[A2]\label{r.a2}\noindent Secondly, since we are considering the thermodynamical behaviour of a system containing the hole, we do not have to consider the black hole's interior. This means that if the event horizon is a closed surface, then any link in the interior of the black hole surface, that does not pierce the surface, should not be considered as a quantum state.
\end{description}

Consider an oriented surface $S$, which models the event horizon of a blackhole. For simplicity, we assume that $S$ is connected and compact, possibly with or without boundary. From Assumption A1, the matter hyperlink representing any set of particles emitted out from the event horizon, is the trivial hyperlink.  From Assumption A2, we do not consider any loop strictly inside the interior of a closed surface $S$. Any matter loop in consideration should have a non-trivial representation. For each matter hyperlink, the component matter loop must have either non-trivial translational or angular momentum, i.e. $j_\pm \neq 0$. Finally, we assume that the piercing number between each trivial loop and the event horizon is $\beta>0$. In summary, any loop in a matter hyperlink must have a piercing number $\beta$ with $S$, colored with a non-zero representation.

The states we need to consider in the dual configuration space will be $\{\oL_m\}_{m=1}^\infty$, each $\oL_m$ will be a trivial hyperlink. Orientate each of the hyperlink in a particular manner, which will not affect the eigenvalue of $\hat{A}_S$. Color each matter hyperlink $\oL_m$ with a set of representations $\{\rho_{m,u} \equiv (\rho_{m,u}^+, \rho_{m,u}^-)\}_{u\geq 1}$. Construct a Quantum Loop Space using $\{\oL_m\}_{m=1}^\infty$, denoted by $\mathcal{H}(\oV^{\ \pm})$, as defined in Definition \ref{d.qls}. Each quantum state $\oL_m$ is represented by a set of unlinked trivial loops, each loop colored by a representation, indexed by $(j^+, j^-)$, with $j^\pm > 0$.

Using Assumptions A1 and A2, each quantum state $\oL_m$ in $\mathcal{H}(\oV^{\ \pm})$ is then represented by a $n$-tuple $\vec{p}^\pm = (p_1^\pm, \ldots, p_n^\pm)$, with arbitrary $n$, denoting the number of trivial loops in $\oL_m$. Each $0<p_u^\pm := 2j_u^\pm$, whereby $j_u^\pm$ is the half-integer representation of the loop $\ol^u$. States labelled with different orderings of the same unordered $n$-tuples of integers $\vec{p}^\pm$ are distinguishable for an external observable.

Now the eigenvalues for the area operator we obtained in Corollary \ref{c.a.1} are either purely real or purely imaginary. We will now consider the real case. The imaginary case is similar. We will also drop the superscript `$+$' and ask, how many states in $\mathcal{H}(\oV^{\ +})$ have real eigenvalues in the interval $\left[A_0 - \bar{\delta}\Lambda, A_0 + \bar{\delta}\Lambda \right]$?

By putting in all the physical constants, our eigenvalue of the area operator for a particular quantum state $\{\oL_m, \rho_m\} \in \mathcal{H}(\oV^{\ +})$ is \beq |q|\beta\frac{\sqrt \pi}{2}\frac{\hbar G}{c^3}\sum_{u \geq 1}\sqrt{j_u(j_u+1)} \equiv |q|\beta\frac{\sqrt \pi}{4}\frac{\hbar G}{c^3}\sum_{u \geq 1}\sqrt{p_u(p_u+2)},\ p_u=2j_u. \nonumber \eeq Now $j_u > 0$ is a half-integer, so $p_u$ is a positive integer.

In \cite{Krasnov:1996wc}, the area operator is actually scaled by some undetermined constant $\gamma$. One can find a similar expression in \cite{Ashtekar:2000eq}. Here, our undetermined constant is actually the charge $q$. In Section \ref{s.hmo}, the charge $q$ has a nice interpretation as mass, which is intrinsic to the matter hyperlink and has nothing to do with the quantum states. Thus, we will drop $q$ from the area operator in the rest of this section.

The proof we are presenting here, is taken from \cite{PhysRevLett.77.3288}, but with some modifications. Let $\delta := 0.55$. Note that $\bigcup_{M \in \mathbb{N}}[M-\delta, M + \delta] \supset [1, \infty)$. For each natural number $M >0$, define the following sets,
\begin{align*}
A(M) :=& \left\{ \vec{p} = (p_1, \ldots, p_n):\ p_u \in \mathbb{N}\ {\rm and}\ M-\delta\leq \sum_{u=1}^n \sqrt{p_u(p_u+2)} \leq M + \delta\right\}, \\
A^+(M) :=& \left\{ \vec{p} = (p_1, \ldots, p_n):\ p_u \in \mathbb{N}\ {\rm and}\ \sum_{u=1}^n p_u = M \right\}, \\
A^-(M) :=& \left\{ \vec{p} = (p_1, \ldots, p_n):\ p_u \in \mathbb{N}\ {\rm and}\ \sum_{u=1}^n (p_u+1) = M \right\}.
\end{align*}

Let $N(M)$ be the number of ordered $n$-tuples in $A(M)$; let $N_\pm(M)$ be the number of ordered $n$-tuples in $A^\pm(M)$.

\begin{lem}\label{l.bh.2}
Let $p_+ = p + 1$, $p$ is a natural number. Then we have
\beq 1 \leq \sqrt{(p_+ + 1)^2-1} - \sqrt{(p+1)^2-1} \leq \sqrt8 - \sqrt{3} < 1.1= 2\delta. \label{e.lhc.1} \eeq
\end{lem}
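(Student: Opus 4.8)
The plan is to reduce the statement to the monotone behaviour of a single unit increment of a concave function. Setting $y := p+1$ (so that $p_+ + 1 = p+2 = y+1$), the two radicands become $(p+1)^2 - 1 = y^2 - 1$ and $(p_+ + 1)^2 - 1 = (y+1)^2 - 1$, so the quantity to be bounded is $D(y) := \phi(y+1) - \phi(y)$ with $\phi(y) := \sqrt{y^2 - 1}$. Since $p$ is a natural number with $p \geq 1$ (recall the black-hole coloring forces $0 < p_u = 2j_u$ with $j_u$ a positive half-integer), we have $y \geq 2$, and it is exactly this restriction that pins the maximum of $D$ at the left endpoint.

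For the lower bound I would argue elementarily by rationalizing: $D(y) = \dfrac{(y+1)^2 - y^2}{\sqrt{(y+1)^2 - 1} + \sqrt{y^2 - 1}} = \dfrac{2y+1}{\sqrt{(y+1)^2 - 1} + \sqrt{y^2 - 1}}$. Because $\sqrt{(y+1)^2 - 1} < y+1$ and $\sqrt{y^2 - 1} < y$, the denominator is strictly less than $2y+1$, the numerator, so $D(y) > 1$. This gives the claimed $\geq 1$ (indeed strictly, with $D(y) \to 1$ as $y \to \infty$).

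For the upper bound I would use concavity. A direct computation gives $\phi'(t) = t/\sqrt{t^2 - 1}$ and $\phi''(t) = -(t^2 - 1)^{-3/2} < 0$, so $\phi'$ is strictly decreasing; hence $D'(y) = \phi'(y+1) - \phi'(y) < 0$ and $D$ is strictly decreasing on $[2, \infty)$. Therefore $D$ attains its maximum over $y \geq 2$ at $y = 2$, i.e. at $p = 1$, $p_+ = 2$, yielding $D(2) = \sqrt{8} - \sqrt{3}$. The final numerical inequality then follows from $\sqrt{8} < 2.829$ and $\sqrt{3} > 1.732$, so $\sqrt{8} - \sqrt{3} < 1.097 < 1.1 = 2\delta$.

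The only genuine pitfall is the convention on $\mathbb{N}$: if $p = 0$ were admissible, the increment would be $\sqrt{3} \approx 1.732 > \sqrt{8} - \sqrt{3}$, which would destroy the upper bound. So the proof really hinges on recording that $p \geq 1$, which is guaranteed by the constraint $j_u > 0$ in the entropy setup; once that is in place, both bounds are routine.
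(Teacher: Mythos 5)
Your proof is correct, and it follows the same overall route as the paper: the lower bound by rationalizing the difference of square roots (the paper bounds the denominator above by $p_+ + 1 + p + 1 = 2p+3$, exactly your observation that $\sqrt{x^2-1} < x$), and the upper bound by showing the increment is decreasing in $p$ so that its maximum occurs at $p=1$, giving $\sqrt{8}-\sqrt{3}$. The one place where you genuinely diverge is the mechanism for monotonicity: the paper computes $f'(p)$ for $f(p)=\sqrt{(p_+ +1)^2-1}-\sqrt{(p+1)^2-1}$ directly and shows its numerator is negative by an algebraic rearrangement that invokes the already-proven lower bound $f(p)\geq 1$, whereas you observe that $\phi(t)=\sqrt{t^2-1}$ is concave ($\phi''(t)=-(t^2-1)^{-3/2}<0$), so $\phi'$ is decreasing and the unit increment $\phi(y+1)-\phi(y)$ is automatically decreasing. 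Your version is slightly cleaner in that the two bounds come out logically independent, while in the paper the upper bound leans on the lower bound. You are also right to flag the convention on $\mathbb{N}$: the paper's proof likewise restricts to $p\geq 1$ (and its entropy setup guarantees $p_u = 2j_u$ is a positive integer since $j_u>0$), and the upper bound would indeed fail at $p=0$, where the increment equals $\sqrt{3} > 1.1$.
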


\begin{proof}
Note that $p_u(p_u+2) = (p_u + 1)^2-1$. Suppose $(p_1, \cdots, p_n) \in A(m)$ for some $m \leq M$. Define \beq f(p) = \sqrt{(p_+ + 1)^2-1} - \sqrt{(p+1)^2-1} > 0,\ p \geq 1. \nonumber \eeq

For the lower bound, note that
\begin{align*}
\sqrt{(p_+ + 1)^2 - 1} - \sqrt{(p+1)^2 - 1} =& \frac{(p_+ + 1)^2 - 1 - (p+1)^2 + 1}{\sqrt{(p_+ + 1)^2 - 1} + \sqrt{(p+1)^2 - 1}} \\
\geq& \frac{p^2+4p + 4 - p^2 - 2p - 1}{p_+ + 1 + p + 1} \\
=& \frac{2p+3}{2p + 3} \geq 1.
\end{align*}

Take the derivative,
\begin{align*}
f'(p) =& \frac{p_+ + 1}{\sqrt{(p_+ + 1)^2 - 1}} - \frac{p+1}{\sqrt{(p+1)^2 - 1}} \\
=& \frac{(p_+ + 1)\sqrt{(p+1)^2 - 1} - (p+1)\sqrt{(p_+ + 1)^2 - 1}}{\sqrt{(p_+ + 1)^2 - 1}\sqrt{(p+1)^2 - 1}}.
\end{align*}

Using the lower bound $f(p) \geq 1$, we have
\begin{align*}
(p_+ + 1)&\sqrt{(p+1)^2 - 1} - (p+1)\sqrt{(p_+ + 1)^2 - 1} \\
=& (p+1)\left[\sqrt{(p+1)^2 - 1} -  \sqrt{(p_+ + 1)^2 - 1}\right] + \sqrt{(p+1)^2 - 1} \\
\leq& -(p+1) + \sqrt{(p+1)^2 - 1} < 0.
\end{align*}

Hence, $f'(p) < 0$ and for $p \in \mathbb{N}$, we obtain an upper bound for $f$,
\beq \sqrt{(p_+ + 1)^2-1} - \sqrt{(p+1)^2-1} \leq \sqrt8 - \sqrt{3} < 1.1= 2\delta. \nonumber \eeq

Put the lower and upper bounds together, we thus have Equation (\ref{e.lhc.1}).
\end{proof}

\begin{lem}\label{l.bh.1}
For $M$ large enough, there exists a $d \equiv d(M)$, $0.48 < d < 0.69$ such that we have $\log N(M) = dM$.
\end{lem}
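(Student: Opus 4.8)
The plan is to pin $N(M)$ between two exactly solvable counting problems and read off $d=\log N(M)/M$. The bridge to $N_\pm(M)$ is the elementary estimate $p_u<\sqrt{p_u(p_u+2)}=\sqrt{(p_u+1)^2-1}<p_u+1$, which gives, for every tuple $\vec p=(p_1,\dots,p_n)$,
\[ \sigma(\vec p):=\sum_u p_u \;<\; \sum_u\sqrt{p_u(p_u+2)} \;<\; \sum_u(p_u+1)=:\tau(\vec p). \]
First I would record the two comparison values. The ordered compositions of $M$ counted by $N_+(M)$ satisfy $N_+(M)=2^{M-1}$, while the compositions into parts $\ge2$ counted by $N_-(M)$ (set $q_u=p_u+1$) obey $N_-(M)=N_-(M-1)+N_-(M-2)$, hence $N_-(M)=F_{M-1}$, the Fibonacci number, so $\log N_-(M)=(M-1)\log\phi+O(1)$ with $\phi=(1+\sqrt5)/2$, $\log\phi=0.4812\ldots>0.48$, and $\log N_+(M)=(M-1)\log2$ with $\log2=0.6931\ldots$.

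For the upper bound, if $\vec p\in A(M)$ then $\sigma(\vec p)<M+\delta$ forces $\sigma(\vec p)\le M$, while the per--part excess $\sqrt{(p_u+1)^2-1}-p_u\ge\sqrt3-1>0.73$ gives $\sigma(\vec p)+(\sqrt3-1)\,n\le M+\delta$; combined with the lower window constraint $\tau(\vec p)>M-\delta$, i.e.\ $\sigma(\vec p)+n\ge M$, this confines $(\sigma,n)$ to a triangle with $\sigma\le M+\delta-0.73\,n$ and $\sigma\ge M-n$. Since the number of compositions of $\sigma$ into exactly $n$ parts is $\binom{\sigma-1}{n-1}$, I would bound $N(M)\le\sum_{(\sigma,n)}\binom{\sigma-1}{n-1}$ over that triangle and estimate by the largest term via Stirling; the entropy maximum over the admissible region is $\approx0.52\,M$, well below $M\log2$, so $N(M)\le\mathrm{poly}(M)\,e^{0.52M}$ and $d<0.69$ for all large $M$. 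The crude inclusion $\sigma\le M$ already yields $N(M)<\sum_{\sigma\le M}2^{\sigma-1}<2^M$ and hence $d\le\log2$; it is the excess constraint that pushes the exponent strictly inside the interval.

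For the lower bound I would exhibit many tuples that actually land in the window $[M-\delta,M+\delta]$. Starting from the $F_{M-1}$ compositions of $M$ into parts $\ge2$ and setting $p_u=q_u-1$, each resulting tuple has weight $S=\sum\sqrt{q_u^2-1}=M-\sum(q_u-\sqrt{q_u^2-1})$, a deficit that is a sum of bounded contributions, one per part. A concentration (law-of-large-numbers/Chernoff) estimate for random compositions then pins $S$ in an interval of length $O(\sqrt M)$ about a value $M_0\sim cM$ with $c<1$, so by pigeonhole some integer window near $M_0$ contains at least $F_{M-1}/\mathrm{poly}(M)\ge e^{(\log\phi)M-o(M)}$ of them; since $M_0\le M$ this gives $d(M_0)>0.48$. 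As $M$ runs over the integers the centers $M_0(M)$ advance in steps of size $c<1$ and so meet every sufficiently large window, and Lemma \ref{l.bh.2}, which says that raising one part shifts $S$ by an amount in $[1,2\delta]$ with $2\delta=1.1>1$, lets me move between adjacent windows while losing only a bounded factor, transferring the estimate to every large $M$.

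The main obstacle is precisely the incommensurability of the area quanta $\sqrt{p(p+2)}$: there is no exact integer recursion for $N(M)$, so both bounds must convert a real-valued constraint on $\sum_u\sqrt{p_u(p_u+2)}$ into counts of integer compositions, with the errors absorbed by the window of width $2\delta>1$ provided by Lemma \ref{l.bh.2}. The genuinely delicate step is the lower bound: the pigeonhole only locates a good window a priori, and the real work lies in the concentration estimate together with the overlapping-window transfer needed to make $0.48<d(M)$ hold for every large $M$ rather than merely along a subsequence.
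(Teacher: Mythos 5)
Your sandwich framing ($N_-(M)\leq N(M)\lesssim N_+(M)$, with $\log N_+(M)\sim M\log 2$ and $\log N_-(M)\sim M\log[(1+\sqrt5)/2]$) is the same as the paper's, and your upper bound, though executed differently, is sound: every $\vec p\in A(M)$ has integer sum $\sigma\leq M$ with the part count constrained by the excess inequality, so the union bound over compositions plus Stirling is legitimate, and your entropy maximum $\approx 0.52M$ is in fact \emph{sharper} than the paper's bound. (The paper injects $A(M)$ into $A^+(M)$ with multiplicity at most $2$ via Lemma \ref{l.bh.2}, getting $N(M)\leq 2N_+(M)=2^M$, which strictly speaking only gives $d\leq\log 2=0.6931\ldots$, not $d<0.69$; your tighter constant closes that small gap.)

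The genuine gap is in your lower bound. The Fibonacci-many compositions of $M$ into parts $\geq 2$ do \emph{not} have weight near $M$: the deficit $\sum_u\bigl(q_u-\sqrt{q_u^2-1}\bigr)$ is a positive constant fraction of $M$, so these tuples populate windows near $cM$ with $c<1$, and everything then hinges on the concentration estimate, the pigeonhole, and the window-transfer, none of which you carry out. Moreover the transfer as stated is flawed: raising the last part by $L$ units changes the weight by an amount anywhere in $[L,\,1.1L]$, so two source tuples whose last parts differ by as much as $\sim 0.1L$ can collide after raising --- the collision multiplicity grows linearly with the distance moved, not ``a bounded factor'' (this is repairable, since a polynomial loss is harmless against exponential growth, but it is one of several deferred steps, along with the unproved regularity of the centers $M_0(M)$). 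The paper's proof shows this entire apparatus is unnecessary: define $G:A^-(M)\to A(M)$ by sending $(p_1,\ldots,p_n)$, which satisfies $\sum_u(p_u+1)=M$, to $(p_1,\ldots,p_n+L)$ with $L$ the \emph{smallest} whole number landing the weight in $[M-\delta,M+\delta]$. Such an $L$ exists because each unit raise moves the weight by an amount in $[1,2\delta]$ (Lemma \ref{l.bh.2}), so the window of width $2\delta=1.1$ cannot be jumped over; and $G$ is injective because equal prefixes together with the shared constraint $\sum_u(p_u+1)=M$ force equal last parts. This yields $N(M)\geq N_-(M)=Da_+^M+Ea_-^M$ directly, for \emph{every} large $M$, with no concentration, pigeonhole, or transfer. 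In short: keep your upper bound, but replace your lower-bound program with the injection from $A^-(M)$.
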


\begin{proof}
First, we find an upper bound $N(M)$. Define $F: A(M) \rightarrow A^+(M)$ as follows. Consider any $n$-tuple $(p_1, \ldots, p_n) \in A(M)$. Since \beq \sum_{u=1}^n \sqrt{p_u(p_u+2)} \geq \sum_{u=1}^n p_u ,\nonumber \eeq thus $\sum_{u\geq 1}p_u = m$ for some unique $m \leq M$. Then \beq
F: (p_1, \ldots, p_n) \longmapsto (p_1, \ldots, p_n+K), \nonumber \eeq with $K$ being the smallest whole number such that $(p_1, \ldots, p_n + K) \in A^+(M)$. Such a $K$ exists because $m \leq M$. Clearly, $F$ is well-defined.

Suppose $F(p_1, \ldots, p_n) = F(q_1, \ldots, q_n)$. By definition of $F$, we must have $p_u = q_u$ for $1\leq u \leq n-1$ and there exist $K$ and $\tilde{K}$ such that $p_n + K = q_n + \tilde{K}$. If $K \neq \tilde{K}$ and hence $p_n \neq q_n$, then the above Equation (\ref{e.lhc.1}) says that $|p_n - q_n| \leq 1$. This means that at most 2 elements in $A(M)$ can be mapped to the same element in $A^+(M)$.

Hence $N(M) \leq 2N_+(M)$. From the proof in \cite{PhysRevLett.77.3288}, we see that $N_+(M) = (C/2)2^M$ for some constant $C$. Therefore, we have $N(M) \leq 2N_+(M) = C2^{M}$.

Now we find a lower bound for $N(M)$. Define $G: A^-(M) \rightarrow A(M)$ as follows. Consider any $n$-tuple $(p_1, \ldots, p_n) \in A^-(M)$. Since \beq \sum_{u=1}^n \sqrt{p_u(p_u+2)} = \sum_{u=1}^n \sqrt{(p_u+1)^2-1} \leq \sum_{u=1}^n (p_u+1)= M, \nonumber \eeq we have $(p_1, \ldots, p_n) \in A(m)$ for some $m \leq M$.

Then \beq G: (p_1, \ldots, p_n) \longmapsto (p_1, \ldots, p_n+L), \nonumber \eeq with $L$ being the smallest whole number such that $(p_1, \ldots, p_n + L) \in A(M)$. Such a $L$ exists because $m \leq M$ and it  is well-defined.

Suppose $(p_1, \ldots, p_n), (q_1, \ldots, q_n) \in A^-(M)$ such that $G(p_1, \ldots, p_n) = G(q_1, \ldots, q_n)$. By definition of $G$, we must have $p_u = q_u$ for $1\leq u \leq n-1$. Since $\sum_{u=1}^n (p_u+1)= \sum_{u=1}^n (q_u+1) = M$, we see that $p_n = q_n$, hence $G$ is an injective map.

Therefore, $N_-(M) \leq N(M)$. From the proof in \cite{PhysRevLett.77.3288}, we see that for some constants $D$ and $E$, \beq N_-(M) = Da_+^M + Ea_-^M, \nonumber \eeq whereby $a_\pm$ are the roots to the equation $x^2 = x + 1$. We let $a_+ = (1+\sqrt5)/2$ be the larger root.

When $M$ is large, $a_+^M$ will dominate, so we will have the lower bound $N(M) \geq Da_+^{M}$. Putting all the bounds together, we will have \beq Da_+^{M} \leq N(M) \leq C2^{M}, \nonumber \eeq for large enough $M$. Take $\log$, we will have \beq \log D  + M\log a_+ \leq \log N(M) \leq \log C + M\log 2 . \nonumber \eeq

Thus, for large enough $M$, there exists a $\log[(1+\sqrt5)/2] < d(M) < \log 2$ such that $\log N(M) = d(M)M$.
\end{proof}

Given $M \in \mathbb{N}$, we note that \beq [A_0 - \bar{\delta}\Lambda , A_0 + \bar{\delta}\Lambda] = \beta\Lambda[M-\delta, M+\delta], \nonumber \eeq for some \beq A_0 = \beta\Lambda M , \quad \bar{\delta} := \beta\delta. \nonumber \eeq

Let the number of quantum states whose real eigenvalues lie inside $[A_0 - \bar{\delta}\Lambda, A_0 + \bar{\delta}\Lambda]$ be denoted by $\tilde{N}(A_0)$. From Lemma \ref{l.bh.1}, we will have \beq \log \tilde{N}(A_0) = \log N(M) = \frac{4d(M)}{\beta\sqrt \pi} \frac{c^3}{\hbar G} A_0. \nonumber \eeq Thus entropy \beq S= k\log \tilde{N}(A_0)= \left( \frac{4d(M)}{\beta\sqrt \pi} \right) \frac{kc^3}{\hbar G}A_0. \nonumber \eeq

\begin{rem}
We refer the reader to Equation (47) in \cite{Rovelli1998}, whereby the entropy is given by \beq S = \frac{\mathbf{c}}{\gamma}\frac{k}{4\hbar G}A, \nonumber \eeq and $\mathbf{c} \sim 1/4\pi$. There is a constant $\gamma$, for which the meaning is unclear. If one compares this expression with our expression, then one may try to identify $\gamma$ with $\beta$. But in our expression, $\beta$ is the piercing number, which has no association with $\gamma$.
\end{rem}

%

A simple calculation will give us \beq 1.08 < \frac{4d(M)}{\sqrt \pi} < 1.56. \nonumber \eeq Now, $1.56/6 \approx 0.26$ and $1.08/5 \approx 0.216$. Thus by choosing an appropriate piercing number $\beta$, the proportionality constant is somewhere between \beq 0.216 < \frac{4d}{\beta\sqrt \pi} < 0.26. \nonumber \eeq

So, we showed that entropy of a black hole is proportional to its area. The constant obtained by Bekenstein was 0.25, but the constant we obtained lies between 0.216 and 0.26. The authors in \cite{Ashtekar:1997yu} gave a partial explanation for this constant, but remarked that the full significance of this proportionality constant is still not well-understood. Alternatively, we could have made some choice of $q$ to obtain the exact proportionality constant, as done in \cite{Ashtekar:2000eq}.

We would like to mention the work done by the authors in \cite{Ashtekar:2000eq}, whereby they obtained the correct constant 0.25 in the Bekenstein-Hawking formula. They did this by introducing the Barbero-Immirzi $\gamma$ to the area operator and defined it accordingly to obtain the right constant. Note that in obtaining a lower bound for the counting of states, they only consider states whereby each $j_u = 1/2$.

\section{Final Remarks}\label{s.fr}

We would like to conclude this article with the following remarks.

\subsection{Gravitons}\label{ss.gv}

Gravitons are predicted to exist in QFT. However, using spin networks, gravitons play little or no role. See \cite{Thiemann:2002nj}. But this is not the case in loop representation of quantum gravity.

We have already explained that the matter hyperlink should represent fundamental particles. And we hope that we have given enough reasons to support this argument. But till now, we have not mentioned anything about the geometric hyperlink. So what does it represent?

In GR, gravity is not a force; but rather a manifestation of how space-time is curved. See \cite{schutz1985first}. To calculate this curvature, one has to obtain the metric $g$, a dynamical variable that is to be solved from Einstein's equations. From Equation (\ref{e.l.7}), we see that the geometric hyperlink is responsible for carrying information of the metric $g$. We postulate that the projection of the geometric hyperlink, $\pi_0(\uL)$, should represent the orbits of another particle, which we will term it as graviton. Now, the graviton at the time of writing, has not been discovered experimentally. In theory, it should have zero mass, just like the photon. See \cite{feynman2002feynman}.

In \cite{CS-Lim01}, we computed the Chern-Simons path integral for the Abelian gauge group $U(1)$. Now, it is known that the Abelian group in Yang-Mills Theory describes the Electromagnetic theory. The photons are the force carrying particles responsible for the electromagnetic force.

In Section \ref{s.cvs}, we explained that even though the vierbein $e$ is not a gauge, however, it can be viewed as a connection using an Abelian gauge group, the group of translations in a 4-dimensional vector space $V$. If we compare with the $U(1)$ gauge theory, then we see that the graviton should also have zero mass.


The Wilson Loop observable of a colored hyperlink $\chi(\oL, \uL)$ is only dependent on the hyperlinking number between each component matter loop $\ol^u$ and the geometric hyperlink $\uL$. See Definition \ref{d.z.1}. From our construction in Notation \ref{n.l.1}, we see that gravitons are produced when there are particles in space. If there are no particles, which imply the absence of matter hyperlinks, then any geometric hyperlink will just yield 1 for the Wilson Loop observable. Loop representation of quantum gravity collapses to a trivial theory.

So it seems that by having particles, we create gravitons, which carry information about the metric. Is it possible for gravitons to create particles? If one looks at Einstein's Equation (\ref{e.einstein}), we see that the particles define the stress-energy tensor which appears on the RHS of the equation. Therefore, his equations forbid the gravitons from creating the particles. Furthermore, the matter hyperlink carries information on the representation of each loop; geometric hyperlinks are not colored. Hence, we do not think it is possible for gravitons to create particles.

Given a set of particles, how do we obtain the colored matter hyperlink $\oL$ that describes these particles and a geometric hyperlink $\uL$ that carries information about the metric and is tangled with $\oL$ to form $\chi(\oL, \uL)$? We do not have an answer to this and probably the answer can only be found in QFT or in some other theory, which is not within the framework of our current work on quantum gravity. But this means that in future, it may be necessary to redefine our construction on loop representation of quantum gravity.

The relationship between matter hyperlink and geometric hyperlink was also implied in the stress operator defined in Definition \ref{d.mo}. This operator was defined on the tensor product $\oV^{\ \pm} \otimes \uV^{\ \pm}$. This means that both hyperlinks must be considered together. Furthermore, the Hamiltonian constraint operator defined in Definition \ref{d.ham}, was also defined on this tensor product.

At the time of writing, gravitational waves were discovered experimentally. Gravitons have yet to be discovered. So how does one attempt to find them? One possible way would be via the experimental values of the curvature operator $\hat{F}_S$ of a surface $S$, and the potential operator $\hat{U}$. To measure discrete values for the curvature of a surface $S$ and potential energy in a small region $R$ in space, the surface and region  must have dimensions of the order of Planck's length. There might be experimental difficulties trying to measure curvature and potential as described above. To see the effects of quantum gravity, we may have to go down to Planck's length or near a Big Bang singularity. And this could be the reason why till now we are unable to verify the existence of gravitons experimentally.

\subsection{Causality}

We refer the reader to \cite{THOOFT2001157}, whereby the author explained how causality is violated when one attempts to quantize gravity. Because there is no preferred metric, it is not possible to have a time scale in LQG. But we saw throughout this article, we have a concept of time-ordering in place of it.

The time-ordering between the submanifolds discussed throughout this article, imply causality, which we will discuss in detail here. Under the time-like triple equivalence relation, causality will never be violated. This means we only consider homeomorphisms of $\bR \times \bR^3$, that respects causality. This is necessary in special relativity, as causality is also a consequence of the fact that information cannot travel faster than the speed of light.

For simplicity, consider a matter loop $\ol$ and a geometric loop $\ul$, together form a time-like hyperlink $\chi(\ol, \ul)$. If $\ol < \ul$, then $\ol$ will be interpreted as the cause and $\ul$ is the effect. If $\ol$ represents a particle and $\ul$ represents a graviton, then we may say that a particle moving in an orbit, creates a graviton. This creation of graviton is not instantaneous, as there must be a time-lag between the two loops.

But the construction in LQG do not forbid $\ol > \ul$. So what does it mean by $\ul$ is the cause and $\ol$ is the effect? Based on the previous paragraph, one may boldly say that a graviton may create matter in space. But we discussed in the previous section that Einstein's equations may forbid gravitons from creating matter.

Suppose matter loops always occur before, or always occur after geometric loops. For a given $u=1,\cdots, \overline{n}$, if $\ol^u < \uL$ or $\ol^u > \uL$, then we see that ${\rm sk}(\ol^u, \uL) = \pm 3\times {\rm lk}(\pi_0(\ol^u), \pi_0(\uL))$ respectively, and hence
\begin{align*}
Z^+(q;  \chi(\oL, \uL) ) =& \prod_{u=1}^{\on}\ \Tr_{\rho^{+}_u}\ \exp\left[\mp 3\pi iq\ {\rm lk}(\pi_0(\ol^u), \pi_0(\uL)) \cdot \mathcal{E}^{+}\right], \\
Z^-(q;  \chi(\oL, \uL) ) =& \prod_{u=1}^{\on}\ \Tr_{\rho^{-}_u}\ \exp\left[\mp 3\pi iq\ {\rm lk}(\pi_0(\ol^u), \pi_0(\uL)) \cdot \mathcal{E}^{-}\right].
\end{align*}
Because $i\mathcal{E}^\pm$ has both positive and negative eigenvalues of the same magnitude, we see that $Z^\pm(q;  \chi(\oL, \uL) )$ will be independent of the time-ordering in this case. Refer to Equation (\ref{e.tr.1}).

As a consequence, we see that the potential operator $\hat{U}$ and hence the Hamiltonian constraint operator defined in Section \ref{s.hmo}, will now be independent of time-ordering. Refer to Remarks \ref{r.tr.5} and \ref{r.a3}. LQG allows the matter hyperlink to occur before or after geometric hyperlink. But if $\oL$ is always the cause and $\uL$ is always the effect, as implied by Einstein's equations, then we see that the imposed time-ordering between matter and geometric loops, is no longer unbiased.

Now let us discuss the case of a compact solid region $R$. In the measurement of the volume of $R$, one is effectively measuring the kinetic energy of particles passing through the interior of $R$. In Remark \ref{r.tr.2}, we remarked that there is a time-ordering implicitly defined on the nodes, i.e. the nodes appear either before or after time $x_0 = 0$.

Suppose that the nodes in the interior of $R$ appear before time 0. So, the particles had already travelled through spatial region $R$. At time 0, we attempt to measure the volume of $R$, the region whereby the particles supposedly pass through. The experimental values we obtain will be the kinetic energies of these particles. So, the cause would be the trajectories of these particles; the effect would be the kinetic energies obtained, which would also give us the volume of $R$. Note that there is a time-lag between the two events. It is not instantaneous, as information needs time to travel. In the case when the nodes appear after time 0, we do not have a reasonable interpretation of cause and effect.

The situation for a surface $S$ with boundary is similar. When $\ol < S$ and $\ul < S$, then we may interpret the cause as the matter particles and gravitons, all pass through the surface before the surface is formed; the effect would be the experimental values of their momentum, which in turn give us quantized area and quantized curvature values of $S$ respectively. When $\ol > S$ or $\ul > S$, then loops will still have an effect on the area or curvature of the surface $S$, as permitted in LQG.


When $S$ is closed surface, it is a different situation altogether. Suppose $S$ and $\pi_0(S)$ are both closed and connected. Recall in Section \ref{s.obs}, we said an implicit time-ordering is implied, which depends on whether the oriented projected arc joining a left or right piercing, lies in the interior or exterior of connected $\pi_0(S)$. In the former (latter), we interpret that a particle enters (exits) the interior of $\pi_0(S)$ before the formation of the closed surface, and it exits (enters) the interior of the closed surface after the formation of the surface. The casual event consists of the connected closed surface $\pi_0(S)$ being formed during the time period between the entry and exit points of the closed surface; the effect would be the surface $S$ having non-trivial curvature and area eigenvalues, computed from the linking number and piercing number respectively.

Both scenerios are equally possible to happen under the framework of LQG. But because $\pi_0(S)$ divides space $\bR^3$ into interior and exterior of $\pi_0(S)$, we see that GR as a local theory, will dictate that the eigenvalues are computed or measured, when the particle enters the interior of the closed surface during the time period of the formation of the surface. Hence, it breaks the symmetry in the flow of time.

In all the cases as discussed above, a time-lag between events exist, and the cause and effect as defined, is unchanged under the equivalence relations we discussed in Section \ref{s.obs}. The mathematics of LQG do not favor a preferred ordering, hence it does not imply an arrow of time. However, GR seems to imply a particular ordering is actually preferred, under certain circumstances.

\begin{rem}
As remarked in \cite{Rovelli1998}, Penrose argued that time should be asymmetric.
\end{rem}

\subsection{Locality and Principle of Equivalence}\label{ss.lpe}

The Wilson Loop observable $Z(q; \chi(\oL, \uL)$ of a colored hyperlink $\chi(\oL, \uL)$ depends on the global topology of the hyperlink $\chi(\oL, \uL)$. It cannot be determined by looking at the hyperlink locally. The same applies to the linking number of a link in $\bR^3$, and the linking number between a compact surface (with or without boundary) and a loop in $\bR^4$. This means physically it is not possible to determine the quantum state by local observation. This is consistent with the fact that quantum theory is a global theory. See \cite{Okon1}.

But GR is a local theory (See \cite{Okon1}.), so does LGQ contradict this? The answer is no. In LQG, recall that axial gauge fixing makes the metric degenerate, hence there is no metric defined on $\bR \times \bR^3$. Therefore, it makes no sense to talk about locality in LQG, as argued in \cite{THOOFT2001157}.

We have a different view on this. We discussed earlier (See subsections \ref{ss.ao} and \ref{ss.vo}.) that a surface can contain a quanta of area and a compact solid region can contain a quanta of volume. We can break up a surface into smaller pieces, each containing a zero or a quanta of area. A solid region can be broken down into smaller blocks, each is a zero or a quanta of volume. The area and volume observables will actually show that LQG is a local theory. In this sense, we will say that the area and volume operators are `local operators'.

On the other hand, quantized curvature, which depends on the linking number between a hyperlink and a surface, is not an observable that can be computed locally in space, as we had discussed earlier. See subsection \ref{ss.co}. Similarly, the hyperlinking number depends on the global topology of the hyperlink. Thus it shows that LQG is a global theory. Therefore, we can say that the quantized curvature operator and the potential energy operator, are `global operators'. The quantized stress operator and Hamiltonian constraint operator, will then be `global operators'. Indeed, LQG is both a local and global theory.

The theory of GR is built upon the Principle of Equivalence, which states that when gravitational effects are observed, it is then impossible, by any experiment whatsoever, to determine the type of gravitational field responsible for it. See \cite{feynman2002feynman}.

So, can we formulate an equivalence principle in LQG? The answer is a yes, and let us re-visit the quantized curvature operator. By observing the surface, we obtain the quantized curvature, which is proportional to the linking number of the surface with (geometric) hyperlinks. If the linking number is non-zero, one can conclude that gravitons are in action and therefore, a gravitational field is present. If the equivalence principle is to hold, it means that the eigenvalues of the quantized curvature are not sufficient to determine the current quantum state responsible for the effects of gravity.

We can formulate the equivalence principle in LQG in the following way.

\begin{defn}
Let $\{S_1, \cdots, S_n\}$ be compact surfaces and suppose $\{ \lambda_1, \cdots, \lambda_n \}$ is any set of observed eigenvalues corresponding to the quantized curvature operators $\hat{F}_{S_1}, \cdots, \hat{F}_{S_n}$ respectively. Suppose $E(\lambda_i) \subset \mathcal{H}(\uV^{\ \pm})$ is the eigenspace of $\hat{F}_{S_i}$, corresponding to this number $\lambda_i$.

We say that the equivalence principle holds on the quantum Hilbert space \beq \left\{ \hat{F}_{S_1}, \cdots, \hat{F}_{S_n}, \mathcal{H}(\uV^{\ \pm}) \right\}, \nonumber \eeq if we have that the dimension of $\bigcap_{i=1}^n E(\lambda_i)$ is at least 2 or more.
\end{defn}

\begin{rem}
This means that even though we obtain quantized curvature eigenvalues $\lambda_1, \cdots, \lambda_n$ from the surfaces, we are still not able to determine uniquely (up to a constant), the quantum eigenstate $\uPsi_{\uL}^{\ \pm} \in \uV^{\ \pm}$, responsible for the set of observed eigenvalues.
\end{rem}

When one obtains a non-zero value from the area or volume operator, then it is enough to conclude that matter hyperlinks are present. Furthermore, the authors in \cite{Rovelli:1995ac} claimed that volume and area operators are enough to distinguish all spin networks from each other. We can formulate this statement mathematically in the following way.

Let $S_1, \cdots, S_n$ be compact surfaces, $\{\lambda_1, \cdots, \lambda_n\}$ be any set of observed eigenvalues corresponding to the area operators $\hat{A}_{S_1}, \cdots, \hat{A}_{S_n}$ respectively, with the corresponding eigenspaces denoted by $E_1, \cdots, E_n \subset \mathcal{H}(\oV^{\ \pm})$.

Let $R_1, \cdots, R_m$ be compact solid regions, and $\{\mu_1, \cdots, \mu_m\}$ be any set of observed eigenvalues corresponding to the volume operators $\hat{V}_{R_1}, \cdots, \hat{V}_{R_m}$ respectively, the corresponding eigenspaces denoted by $F_1, \cdots, F_m \subset \mathcal{H}(\oV^{\ \pm})$.

Suppose on the Hilbert space \beq \left\{\hat{A}_{S_1}, \cdots, \hat{A}_{S_n}, \hat{V}_{R_1}, \cdots, \hat{V}_{R_m}, \mathcal{H}(\oV^{\ \pm}) \right\}, \nonumber \eeq
$\left[\bigcap_{i=1}^n E_i\right] \cap \left[\bigcap_{i=1}^m F_i\right]$ has dimension one. This means that the eigenvalues from volume and area operators of finitely many compact surfaces and compact solid regions, will be sufficient to determine uniquely (up to a constant) the quantum eigenstate $\oPsi_{\oL}^{\ \pm} \in \mathcal{H}(\oV^{\ \pm})$, which gives us the observed eigenvalues for area and volume.

\end{document}